\newtheorem{Theorem}{Theorem}
\newtheorem{Example}{Example}[section]
\newtheorem{Remark}{Remark}
\newtheorem{Definition}{Definition}[section]
\newtheorem{Lemma}{Lemma}
\newtheorem{Proposition}{Proposition}
\newtheorem{Corollary}{Corollary}
\begin{document}

\title[On polycyclic codes over a finite chain ring]
{On polycyclic codes over a finite chain ring}%
\author[A. Fotue Tabue et al.]{A. Fotue Tabue\textsuperscript{1}, E. Martínez-Moro\textsuperscript{2}, T. Blackford\textsuperscript{3}}%

\address{\textsuperscript{1} Alexandre Fotue Tabue, Department of Mathematics,
Faculty of Science, University of Yaoundé 1, Yaoundé, Cameroon}
\email{alexfotue@gmail.com}

\address{\textsuperscript{2} Edgar Martínez-Moro, Institute of Mathematics, University of Valladolid, Spain}
\email{edgar.martinez@uva.es}

\address{\textsuperscript{3}Thomas Blackford, Department of Mathematics, Western Illinois University, 1 University Circle, Macomb, IL61455, USA}
\email{JT-Blackford@wiu.edu}

\keywords{Galois extension of finite chain rings; Polycyclic code;
Trace code; Restriction code. \\
{Edgar Martínez-Moro has been partially supported by the Spanish MINECO under Grant MTM2015-65764-C3-1}}

\begin{abstract} Galois images of polycyclic codes
over a finite chain ring $S$  and their annihilator dual are
investigated. The case when a polycyclic codes is  Galois-disjoint  over the ring
$S,$  is  characterized and, the trace codes and restrictions
of free polycyclic codes over $S$ are also determined givind  an analogue of Delsarte theorem among trace map, any S -linear code and
its annihilator dual.
\end{abstract}

\maketitle \emph{AMS Subject Classification 2010:} 13B02, 94B05.

\noindent


\section{Introduction}

 In \cite{Lop09} the authors extend the notion of cyclicity   in various directions and
generalized in several ways raising the concepts of polycyclic
codes and sequential codes.  Polycyclic codes (formerly known as
pseudo-cyclic codes) are the generalization of cyclic and
constacyclic codes and were studied in \cite[p.241]{PW72} for the
first time. In \cite{HLP08} X.-dong Hou et al.  introduced
sequential codes over finite fields showing that they are
punctured cyclic codes and established a strong connection between
recurrence sequences and sequential codes.  L\'opez-Permouth, et
al.\cite{Lop09} studied the dual of a polycyclic code and showed
\cite[Theorem 3.5]{Lop09} that a constacyclic code is a polycyclic
code whose dual is again polycyclic. Still polycyclic codes never
enjoyed the same popularity that constacyclic codes have because
the dual of some polycyclic code is not polycyclic. In
\cite{Ala16}, an alternate duality is introduced in order to adress
this issue.


In this paper we extend some
results in \cite{Bla17} to polycyclic codes over finite chain rings
using generator polynomials. Galois variance of constacyclic codes over finite fields was
studied, the concept of Galois-disjoint of linear codes was introduced and a new class of codes
so-called polynomial codes were presented. Notice that those
polynomial codes are indeed polycyclic codes over finite fields.

Throughout this work, $S$ is a commutative ring with
identity, and $\texttt{Aut}(S)$ is the group of ring automorphisms
of $S.$ The action of $\texttt{Aut}(S)$ on $S^n$ leads to the
concept of Galois-invariance for any $S$-linear code of length $n$
(see \cite{MNR13}).   In \cite{Lop13} the Hamming weight of some
polycyclic codes over Galois rings was established. Also in
\cite{Fot18} some class of constacyclic codes over finite chain
rings were characterized via cyclotomic cosets as contraction of
some cyclic codes. In  \cite{MNR17}, the authors studied the
algebraic structure of a class of linear codes over finite chain
rings containing the polycyclic codes.

 The structure of the paper is as follows. After some preliminaries in
Section~\ref{sec:2} we use the concept of strong Gr\"obner
bases for polycyclic codes in order to deal with the algebraic
structure of polycyclic codes and their annihilator duals in
Section~\ref{sec:3}. In Section\,\ref{sec:4} we characterize those
polycyclic codes over finite chain rings that are completely
Galois-disjoint and we deduce  an analogue Delsarte theorem relating the
trace map of any polycyclic code and its annihilator dual.

\begin{figure}[htb!]
\centering

\includegraphics[width=0.70\textwidth]{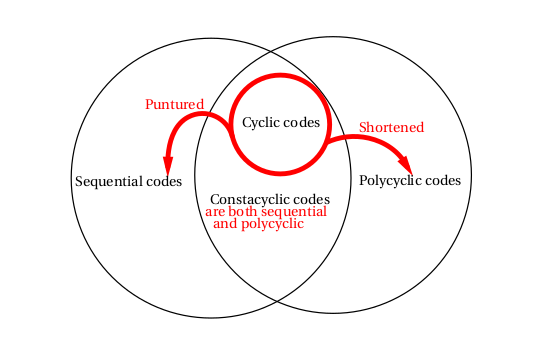}
\end{figure}

\section{Preliminaries}\label{sec:2}

 We use the following notation $p$ be a prime, $a,e,m$ be nonnegative
integers, and $\mathbb{Z}_{p^a}$ be the residue ring of integers
modulo $p^a.$  A ring $S$ is a finite chain ring if $S$ is local
and principal. We will denote  the maximal
ideal of $S$ as  $\texttt{J}(S)=\left<\theta\right>$  and $\mathbb{F}_{p^m}$ the residue
field $S/\texttt{J}(S).$  Thus the ideals of $S$ form a chain under inclusion
$\{0_S\}=\texttt{J}(S)^s\subsetneq\texttt{J}(S)^{s-1}\subsetneq\cdots\subsetneq \texttt{J}(S)\subsetneq S$
and $\texttt{J}(S)^t=\theta^tS$ for $0\leq t< s.$ Note that
 any generator of $\texttt{J}(S)$ is a
root of an Eisenstein polynomial over $\mathbb{Z}_{p^a}$ of degree
$e$ where $p^a$ is the characteristic of $S$ (see for example \cite[Theorem 5.1.]{Nec08}).  The ring
epimorphism $\pi:S\rightarrow\mathbb{F}_{p^m}$ naturally extends a
ring epimorphism from $S[x]$ to $\mathbb{F}_{p^m}[X]$ and on the
other hand it naturally induces an $S$-module epimorphism from
$S^n$ to $(\mathbb{F}_{p^m})^n$, we will abuse the notation  and  thus we also denote it by the same
 $\pi$.

  A polynomial $f$ is basic irreducible over the ring $S$ if
$\pi(f)$ is irreducible over $\mathbb{F}_{p^m}.$   The Galois ring of characteristic $p^a$ and cardinality $p^{am},$
denoted $\texttt{GR}(p^a,m),$ is the quotient ring
$\mathbb{Z}_{p^a}[X]/\langle\,f\,\rangle,$ where
$\langle\,f\,\rangle$ is the ideal generated by a monic basic
irreducible polynomial $f$ of degree $m$ over $\mathbb{Z}_{p^a}.$
Note that for all positive integers $r$ and $r',$ the Galois ring
$\texttt{GR}(p^a,r')$ is a subring of the Galois ring
$\texttt{GR}(p^a,r)$ if and only if $r'$ divides $r.$ Examples of
finite chain rings include Galois rings $\texttt{GR}(p^a,m)$ (here
$\theta =p,$ and $a=s$) and, in particular, finite fields (
$\texttt{GR}(p,m)=\mathbb{F}_{p^m}$).

For a given  finite chain ring $S$ there are integers
 $p,a,m,e,s$  such that  $\mathbb{Z}_{p^a}\subseteq
\mathbb{Z}_{p^a}[\theta]\subseteq S$ and
$S/\texttt{J}(S)\cong\mathbb{F}_{p^m},$ where $\theta$  (the
generator of $\texttt{J}(S)$)  is a root of an Eisenstein polynomial
over $\mathbb{Z}_{p^a}$ of degree $e$ satisfying
$\theta^{s-1}\neq\theta^s=0_S.$  From now on we will denote by $S_r$  the Galois extension of
$\mathbb{Z}_{p^a}[\theta]$ of degree $r$  given by
$S_r=\texttt{GR}(p^a,r)[\theta]$.  Note that a finite chain ring $S$
 is the Galois
extension of $S_r$ of degree $d$ if and only if $m=rd.$ The subset
$\Gamma(S_r)\backslash\{0\}$ is the only cyclic subgroup of
$(S_r)^\times$ isomorphic to $(\mathbb{F}_{p^r})\backslash\{0\},$
where $\Gamma(S_r)=\left\{a \in S_r \,:\, a^{p^r} = a\right\}$ is
the Teichm\"uller set of $S_r.$ Moreover the map
$\pi:S_r\rightarrow\mathbb{F}_{q^r}$ restricted to
$\Gamma(S_r)\backslash\{0\}$ is a group isomorphism from
$\Gamma(S_r)\backslash\{0\}$ onto
$(\mathbb{F}_{p^r})\backslash\{0\}.$ From \cite[Proposition
3.3.]{Nec08} each element $a$ in $S_r$ can be uniquely rewritten
as $a=a_0+a_1\theta+\cdots+a_{s-1}\theta^{s-1},$ where
$(a_0,a_1,\cdots,a_{s-1})\in\Gamma(S_r)^s.$
Thus, the finite chain ring $S_r$ has $p^{sr}$ elements and the units
group $ (S_r)^\times$ of $S_r$ is
$$\left\{\sum_{t=0}^{s-1}a_t\theta^t\,:\,(a_0,a_1,\cdots,a_{s-1})\in\Gamma(S_r)\backslash\{0\}\times(\Gamma(S_r))^{s-1}\right\}.$$

For any divisor $r$ of $m$ the mapping $\sigma^r:S\rightarrow S$
defined by
$\sigma^r\left(\sum_{t=0}^{s-1}a_t\theta^t\right)=\sum_{t=0}^{s-1}a_t^{p^r}\theta^t$
where  $(a_0,a_1,\cdots,a_{s-1})\in\Gamma(S)^s$ is a generator
of the group $\texttt{Aut}_{S_r}(S)$ of the $S_r$-algebra
automorphisms of the ring  $S$, i.e.
$\texttt{Aut}_{S_r}(S)=\left\{\sigma^{ir}\,:\,0\leq i<d\right\}.$
The mapping $\sigma$ is called a \emph{Frobenius automorphism} of
$S.$ It induces the automorphism $\texttt{Fr}_p$ of
$\mathbb{F}_{p^m}$ as follows
$\texttt{Fr}_p(\pi(a))=\pi(\sigma(a))$ for all $a\in S.$

An $S$-linear code $C$ of length $n$ is a submodule of the
$S$-module $S^n.$ An $S$-linear code over $S$ is \emph{free} if
it is free as an $S$-module. A matrix $G$ is called a
\emph{generator matrix} for an $S$-linear code $C$ if the rows of
$G$ span $C$ and none of them can be written as an $S$-linear
combination of the other rows of $G.$ A matrix $G$ is in
\emph{standard form} if there exists an $s$-tuple
$(k_{0},k_{1},\cdots,k_{s-1})$  in $\mathbb{N}^s$ (where $s$ is
the nilpotency index of $S$) such that
\begin{align}\label{stG}G=\left(%
\begin{array}{cccccc}
  I_{k_{0}}  & G_{0,1}        & G_{0,2}        &\cdots    & G_{0,s-1}                 & G_{0,s}                \\
  0        &  \theta I_{k_{1}} &   \theta G_{1,2} & \cdots   &  \theta G_{1,s-1}          & \theta  G_{1,s}         \\
  \cdots   & \cdots         & \cdots         & \cdots   &  \cdots                   & \cdots               \\
  0        &      0         &           0    & \cdots        & \theta^{s-1} I_{k_{s-1}}  &  \theta^{s-1} G_{s-1,s}
\end{array}%
\right)U,
\end{align}
where $I_{k_{t}}$ is an identity matrix of order $k_t$(for $0\leq
t<s$) and $U$ is a
suitable permutation matrix. The $s$-tuple $(k_{0},
k_{1},\cdots,k_{s-1})$ is called \emph{type} of $G.$

From \cite[Proposition 3.2, Theorem 3.5]{NS00}, each $S$-linear
code $C$ admits a generator matrix in standard form, Moreover all  the
generator matrices for  a given code $C$ in standard form have the same type, thus  the type and the rank are numerical invariants for
$S$-linear codes. Henceforth the \emph{type} of any $S$-linear
code is the type of one of its generator matrices in standard
form. The \emph{rank} of an $S$-linear code $C,$ denoted
$\texttt{rank}_S(C),$ is the number of rows of a generator matrix
in standard form. . Notice that
if $C$ is an $S$-linear code of type $(k_{0},
k_{1},\cdots,k_{s-1})$ then $\texttt{rank}_S(C)=k_{0}+
k_{1}+\cdots+k_{s-1},$  also an $S$-linear code $C$ of
type $(k_{0},k_{1},\cdots,k_{s-1})$ is free if and only if
$\texttt{rank}_S(C)=k_0$ and $k_1=k_2=\cdots=k_{s-1}=0.$

 Consider the $n\times n$-matrices
\begin{align}\label{ab}
\begin{array}{ccc}
 \mathrm{D}_{\textbf{a}}= \left(\begin{array}{c|ccc}
    0       &       &                     &     \\
    \vdots  &       & \mathrm{I}_{n-1}    &   \\
    0       &       &                     &   \\
     \hline
    a_0     & a_1   &  \cdots             & a_{n-1}
\end{array}\right) & \text{ and } &
\mathrm{E}_{\textbf{b}}=
  \left(\begin{array}{ccc|c}
    b_0 & \cdots                &  b_{n-2}  &  b_{n-1}   \\
    \hline
        &                       &           &  0 \\
        & \mathrm{I}_{n-1}      &           & \vdots   \\
        &                       &           & 0
\end{array}\right)
\end{array}
\end{align} with associate vectors ${\textbf{a}}=(a_0,a_1,\cdots,a_{n-1})$ and ${\textbf{b}}=(b_0,b_1,\cdots,b_{n-1})
$ in $S^n$. An $S$-linear code of length $n$ is \emph{right
polycyclic}  \cite{Ala16,Lop09} with associate vector
${\textbf{a}}$ (or simply, right ${\textbf{a}}$-\emph{cyclic}) if
it is invariant by right multiplication by the matrix
$\mathrm{D}_{\textbf{a}}$. Also in \cite{Ala16,Lop09} they define
a code as left ${\textbf{b}}$-\emph{cyclic}  if it is  invariant
by right multiplication by the matrix $\mathrm{E}_{\textbf{b}}$.
It is straight forward (adapting the proof in \cite[Theorem
4.]{Ala16} to finite chain rings) to see that any
right ${\textbf{a}}$-cyclic code where the determinant
$\texttt{det}\left(\mathrm{D}_{\textbf{a}}\right)=a_0\in S^\times$
is also left ${\textbf{b}}$-cyclic code  with
$b_j=-a_{j+1}a_0^{-1}$ for $j<n-1$ and $b_{n-1}=a_0^{-1}.$
Therefore from now on  we will assume all right
${\textbf{a}}$-cyclic codes with ${\textbf{a}}\in S^\times\times
S^{n-1}$ and we will simply write ${\textbf{a}}$-cyclic codes.
Note that the $a_0$-constacyclic codes are just
${\textbf{a}}$-\emph{cyclic} codes with
${\textbf{a}}:=(a_0,0,\cdots,0).$

Throughout this paper the
$S$-module monomorphism \begin{align}\label{Psi}
\begin{array}{cccc}
  \Psi: &  S^n & \hookrightarrow & S[X] \\
    & (c_0,c_1,\cdots,c_{n-1}) & \mapsto &
    c_0+c_1X+\cdots+c_{n-1}X^{n-1}.
\end{array}
\end{align} will identify the vectors in $S^n$ with the polynomials
of $S[X]$ with degree at most $n.$  Note that $C$ is an
${\textbf{a}}$-cyclic code if and only if $\Psi(C)$ is an ideal of the quotien ring 
$S[X]/\langle\,
X^n-\Psi(\,{\textbf{a}}\,)\,\rangle$.  We define the
${\textbf{a}}$-period of an  ${\textbf{a}}$-cyclic code over $S$
as the positive integer $\ell_{{\textbf{a}}}$ such that
\begin{align}\ell_{{\textbf{a}}}:=\texttt{min}\left\{i\in\mathbb{N}\backslash\{0\}\;:\;X^n-\pi(\Psi(\,{\textbf{a}}\,))\text{
divides } X^{i}-1\right\}.\end{align}

Note that   the quotient ring $S[X]/\langle\,
X^n-\Psi(\,{\textbf{a}}\,)\,\rangle$ is a principal ideal ring if
either $S$ is a field or $X^n-\pi(\Psi(\,{\textbf{a}}\,))$ is
square free \cite[Theorem
5.2.]{Lop13}.
From now on we will consider   ${\textbf{a}}$-cyclic code over $S$   whose period
is
relatively prime to $p$
and hence
 $X^n-\pi(\Psi(\,{\textbf{a}}\,))$
is square free.
From \cite[Theorem 2.7]{NS00}, if $g\in S[X]$ is monic and
$\pi(g)$ is square-free, then $g$ factors uniquely into monic,
coprime basic irreducibles. For any $f \in \mathbb{F}_{p^m}[X]$
dividing $X^{\ell_{\textbf{a}}} - 1$ such that $p\nmid
\ell_{{\textbf{a}}},$ \cite[Theorem XIII.4]{McD74} implies the
existence   of a unique polynomial $g\in S[X]$ such that
$\pi(g)=f$ and $g$ divides $X^{\ell_{\textbf{a}}} -1$ since
$X^{\ell_{\textbf{a}}} - 1$ is square-free in
$\mathbb{F}_{p^m}[X].$ The polynomial $g$ will be called the
\emph{Hensel lift} of $f.$

\section{Polycyclic codes and their dual codes}\label{sec:3}

In the field case the dual of polycyclic codes  were studied in
\cite{HLP08} via their parity-check matrix. In this section we
generalize structural theorems of polycyclic codes over a finite
field to a finite chain ring $S$.

\subsection{Free polycyclic codes}

Let $\omega\in\{1,2,\cdots,n\}$ and $g=g_0+g_1X +\cdots+g_{k}X^k\in S[X]$
be a polynomial of degree $\texttt{deg}(g):=k$ over $S$ with
$k<n,$ we consider the $\omega\times n$-matrix
\begin{align}
\mathrm{M}_\omega(g):=\left(\begin{array}{ccccccc|ccc}
  g_0      & g_1    & \cdots & g_{k}      & 0      & \cdots & 0       &0&\cdots&0\\
  0        & g_0    & g_1     & \cdots & g_{k}     & 0      & \cdots  &0&\cdots&0\\
  \vdots   & \ddots & \ddots  &  \ddots       &         & \ddots &         &\vdots & &\vdots\\
  0        & \cdots & 0       & g_0     & g_1    & \cdots & g_{k}     &0&\cdots&0
\end{array}%
\right),
\end{align} in which the entries
of the first row correspond to the coefficients of $g,$ and  its
$i^{\text{th}}$ row is the cyclic shift of the $(i-1)^{\text{th}}$
row, for all $i\in\{2,3,\cdots,\omega\}.$ It is easy to see that
if $g_0\in S^\times,$ then the rank of $\mathrm{M}_\omega(g)$ is
$\omega.$

\begin{Definition} A free $S$-linear code $C$ of length $n$ and rank $n-k$ is
\emph{free polycyclic} if there is a monic polynomial $g$ such
that $\mathrm{M}_{n-k}(g)$ is a generator matrix for $C.$
\end{Definition}

Note that for any free polycyclic code $C$ over $S$ there is a unique monic polynomial $g$ over $S$ such that $\mathrm{M}_{n-k}(g)$ is
a generator matrix for $C.$ This polynomial $g$ is called the
\emph{generator polynomial} of $C$ and the free polycyclic code
over $S$ with generator polynomial $f$ of length $n$ is denoted
$\mathcal{P}(S;n;g).$  We will  assume
$\mathcal{P}(S;n;g)=\{\,\textbf{0}\,\},$ if $\texttt{deg}(g)\geq
n$.

\begin{Remark}
Note that
$\pi(\mathcal{P}(S;n;g))=\mathcal{P}(\mathbb{F}_{p^m};n;\pi(g))$, thus
it follows as a generalization of \cite[Theorems 2.3 and
2.4]{Lop09} to finite chain rings  the following
characterization: any free polycyclic code over $S$ with generator
matrix $g$ satisfies $g_0\in S^\times.$
\end{Remark}

From now on, $g=g(X)$ will denote a
monic polynomial in $S[X]$ with $g_0\in S^\times$ and $\textbf{a}$
in $(S^\times)\times(S)^{n-1}$ where $\Psi(\textbf{a})$ is the
{remainder of the Euclidian division} of $X^n$ by $g.$

\begin{Proposition}\label{s3p1}  The following assertions hold.
\begin{enumerate}
    \item $\mathcal{P}(S;n;g)=\biggl\{\textbf{c}\in S^n\;:\;g \text{ divides } \Psi(\,\textbf{c}\,)\biggr\}.$
    \item $\mathcal{P}(S;n;g)$ is  ${\textbf{a}}$-cyclic.
\end{enumerate}
\end{Proposition}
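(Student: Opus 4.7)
The plan is to handle both parts through the $S$-module isomorphism $\Psi$, translating the row-span description of $\mathcal{P}(S;n;g)$ into a divisibility condition in $S[X]$.

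\textbf{Part (1).} The rows of $\mathrm{M}_{n-k}(g)$, read through $\Psi$, are precisely the polynomials $g(X), X g(X), \ldots, X^{n-k-1} g(X)$, since each row is the right-shift by one of the preceding row and the first row encodes the coefficients of $g$. Therefore
\[ \Psi(\mathcal{P}(S;n;g)) \;=\; \bigl\{\,h(X) g(X) \,:\, h \in S[X], \ \deg h \leq n-k-1\,\bigr\}. \]
Because $g$ is monic of degree $k$, polynomial division shows that for any $p \in S[X]$ with $\deg p \leq n-1$, the condition $g \mid p$ is equivalent to writing $p = h g$ with $\deg h \leq n-k-1$ (the quotient is unique and its degree is exactly $\deg p - k$). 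This yields the claimed description of $\mathcal{P}(S;n;g)$.

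\textbf{Part (2).} For $\textbf{c} = (c_0, \ldots, c_{n-1}) \in S^n$, I would establish the identity
\[ \Psi(\textbf{c} \, \mathrm{D}_{\textbf{a}}) \;=\; X \, \Psi(\textbf{c}) \;-\; c_{n-1}\bigl(X^n - \Psi(\textbf{a})\bigr) \]
by directly reading off the columns of $\mathrm{D}_{\textbf{a}}$: the first $n-1$ columns perform a right-shift (producing $X\Psi(\textbf{c})$ up to the stray term $c_{n-1}X^n$), while the last row supplies the wrap-around contribution $c_{n-1}\Psi(\textbf{a})$. By the choice of $\textbf{a}$, we have $X^n - \Psi(\textbf{a}) = g(X) q(X)$ for some $q \in S[X]$, so this difference is divisible by $g$. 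If $\textbf{c} \in \mathcal{P}(S;n;g)$, then $g \mid \Psi(\textbf{c})$ by part~(1), hence both terms on the right-hand side are divisible by $g$; since $\Psi(\textbf{c}\,\mathrm{D}_{\textbf{a}})$ has degree at most $n-1$, part~(1) again gives $\textbf{c}\,\mathrm{D}_{\textbf{a}} \in \mathcal{P}(S;n;g)$, which is what $\textbf{a}$-cyclicity requires.

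\textbf{Main obstacle.} Nothing in the argument is genuinely deep; the entire proof hinges on the polynomial identity expressing right-multiplication by $\mathrm{D}_{\textbf{a}}$ as multiplication by $X$ modulo $X^n - \Psi(\textbf{a})$. The only delicate point is the bookkeeping in deriving this identity, because $\mathrm{D}_{\textbf{a}}$ mixes a plain shift with the wrap-around from its final row, and one must also confirm that the monicity of $g$ makes the quotient $q$ in $X^n - \Psi(\textbf{a}) = g q$ well-defined. Once that formula is in hand, both assertions follow immediately from the definition of $\mathcal{P}(S;n;g)$ together with part~(1).
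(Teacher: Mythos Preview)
Your proposal is correct and follows essentially the same route as the paper: identify the rows of $\mathrm{M}_{n-k}(g)$ with $g, Xg,\ldots,X^{n-k-1}g$ to get the divisibility description in part~(1), and for part~(2) derive the identity $\Psi(\textbf{c}\,\mathrm{D}_{\textbf{a}})=X\Psi(\textbf{c})-c_{n-1}\bigl(X^n-\Psi(\textbf{a})\bigr)$ and combine it with $g\mid X^n-\Psi(\textbf{a})$. If anything, your version is slightly more explicit than the paper's, which suppresses the factor $c_{n-1}$ and the degree bookkeeping on the quotient.
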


\begin{proof} Let $\textbf{c}=(c_0,c_1,\cdots,c_{n-1})\in \mathcal{P}(S;n;g)=C.$

\begin{enumerate}
  \item  From the definition of $\mathcal{P}(S;n;g),$ it follows that
$M_{n-k}(g)$ is a generator matrix for $C$ if and only if there
is $(m_0,m_1,\cdots,m_{n-k-1})\in S^{n-k}$ such that
$\textbf{c}=(m_0,m_1,\cdots,m_{n-k-1})M_{n-k}(g).$ Note that
$\Psi(\textbf{c})=\Psi(m_0,m_1,\cdots,m_{n-k-1},0,\cdots,0)g.$
Therefore $\textbf{c}\in C,$ if and only if $g$ divides
$\Psi(\textbf{c}).$
\item We have that
$\textbf{c}\mathrm{D}_{\textbf{a}}=(0,c_0,c_1,\cdots,c_{n-2})+c_{n-1}(a_0,a_1,\cdots,a_{n-1}).$
It follows that
$\Psi(\textbf{c}\mathrm{D}_{\textbf{a}})=X\Psi(\textbf{c})-gh,$
since $X^n-\Psi(\textbf{a})=gh$ for some $h\in S[X].$ Hence
$\Psi(\textbf{c}\mathrm{D}_{\textbf{a}})\in C$ because $g$ divides
$\Psi(\textbf{c})$ and therefore $\mathcal{P}(S;n;g)$ is
${\textbf{a}}$-cyclic.
\end{enumerate}

\end{proof}
Let us  follow the notation below.
\begin{itemize}
    \item $\mu(g_1,g_2,\cdots,g_u)$ the Hensel lift of $\texttt{lcm}(\pi(g_1),\pi(g_2),\cdots,\pi(g_u)),$  to $S[X].$
    \item $\delta(g_1,g_2,\cdots,g_u)$ the Hensel lift of $\texttt{gcd}(\pi(g_1),\pi(g_2),\cdots,\pi(g_u)),$  to $S[X].$
\end{itemize}

\begin{Lemma}\label{lem3} Let $\mathcal{P}_i=\mathcal{P}(S;n;g_i)$ where $g_i$ is a  monic polynomial over $S$ such that the period of $\pi(g_i)$ is coprime to
$p$ and $\texttt{deg}(g_i)<n,$ for $1\leq i\leq u.$ The following
assertions hold.
\begin{enumerate}
    \item  $g_i$ divides $g_j$ if and only if $\mathcal{P}_i\supseteq\mathcal{P}_j,$ for all $1\leq i,j\leq u;$
    \item  $\bigcap\limits_{i=1}^u\mathcal{P}_i=\mathcal{P}(S;n;\mu(g_1,\cdots,g_u));$
    \item  $\sum\limits_{i=1}^u\mathcal{P}_i\subseteq\mathcal{P}(S;n;\delta(g_1,\cdots,g_u));$
    \item If $\texttt{deg}(\mu(g_1,\cdots,g_u))\leq n,$ and $\sum\limits_{i=1}^u\mathcal{P}_i$ is free, then $\sum\limits_{i=1}^u\mathcal{P}_i=\mathcal{P}(S;n;\delta(g_1,\cdots,g_u)).$
\end{enumerate}
\end{Lemma}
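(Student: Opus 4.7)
My plan for Lemma \ref{lem3} rests on Proposition \ref{s3p1}(1), which identifies $\mathcal{P}(S;n;g)$ with $\{c \in S^n : g \mid \Psi(c)\}$, together with the unique factorization of monic polynomials in $S[X]$ whose reduction is square-free (\cite[Theorem 2.7]{NS00}) and the uniqueness of Hensel lifts. The hypothesis that $\pi(g_i)$ has period coprime to $p$ guarantees that $\pi(g_i)$ is square-free and that each $g_i$ factors in $S[X]$ into pairwise coprime monic basic irreducibles, namely the Hensel lifts of the irreducible factors of $\pi(g_i)$.

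For (1), the forward direction is transitivity of divisibility. For the converse, since $\deg g_j < n$, the vector of coefficients of $g_j$ (padded with zeros) lies in $\mathcal{P}_j \subseteq \mathcal{P}_i$, forcing $g_i \mid g_j$. For (2), $c \in \bigcap_i \mathcal{P}_i$ iff every $g_i$ divides $\Psi(c)$, iff every basic-irreducible Hensel-lift factor of every $g_i$ divides $\Psi(c)$ (coprimality permits CRT-style separation), iff the product over the union of these factors divides $\Psi(c)$; this product reduces modulo $\theta$ to $\mathrm{lcm}(\pi(g_i))$ and divides a suitable $X^{\ell}-1$, so it coincides with $\mu$ by uniqueness of Hensel lifts. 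For (3), the product $\delta$ over the common basic-irreducible factors divides each $g_i$ in $S[X]$, so (1) gives $\mathcal{P}_i \subseteq \mathcal{P}(S;n;\delta)$ for every $i$ and hence $\sum_i \mathcal{P}_i \subseteq \mathcal{P}(S;n;\delta)$.

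For (4), the plan is to compare $S$-ranks and apply a chain-ring uniqueness property. Both $\sum_i \mathcal{P}_i$ and $\mathcal{P}(S;n;\delta)$ are free, with $\sum_i \mathcal{P}_i \subseteq \mathcal{P}(S;n;\delta)$ and $\mathrm{rank}_S(\mathcal{P}(S;n;\delta)) = n - \deg \delta$. Reducing modulo $\theta$ and using $\pi(\mathcal{P}(S;n;g)) = \mathcal{P}(\mathbb{F}_{p^m};n;\pi(g))$,
\[
  \pi\bigl(\textstyle\sum_i \mathcal{P}_i\bigr) \;=\; \textstyle\sum_i \mathcal{P}(\mathbb{F}_{p^m};n;\pi(g_i)).
\]
Over the field, no freeness hypothesis is required: induction on $u$ using (2) and the dimension identity $\dim(A+B) = \dim A + \dim B - \dim(A \cap B)$ yields, under $\deg \mu \leq n$, that the right-hand side equals $\mathcal{P}(\mathbb{F}_{p^m};n;\gcd(\pi(g_i))) = \pi(\mathcal{P}(S;n;\delta))$. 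For a free $S$-submodule $M$ of $S^n$, the Fitting ideal of a basis matrix contains a non-zero-divisor, hence a unit (non-zero-divisors in $S$ are units), so $M$ is a direct summand of $S^n$ and $\dim_{\mathbb{F}_{p^m}} \pi(M) = \mathrm{rank}_S(M)$. Applied to both sides, this gives $\mathrm{rank}_S(\sum_i \mathcal{P}_i) = n - \deg \delta = \mathrm{rank}_S(\mathcal{P}(S;n;\delta))$.

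The main obstacle is the final step of (4): passing from equal $S$-ranks to equality of two free $S$-modules one inside the other. Over a field this is trivial, but over the chain ring $S$ it invokes the fact that an injective endomorphism of a finite-rank free $S$-module is surjective, which in turn rests on the coincidence of non-zero-divisors with units in $S$ together with the Cramer/adjugate identity $A \cdot \mathrm{adj}(A) = \det(A) I$. All other steps reduce, after passage through $\pi$, to standard polynomial-algebra facts over $\mathbb{F}_{p^m}$.
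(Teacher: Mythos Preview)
Your proposal is correct and follows essentially the same architecture as the paper: part (1) via Proposition~\ref{s3p1}, part (3) via (1), and part (4) by passing to $\mathbb{F}_{p^m}$ via $\pi$, establishing the equality there, and then matching $S$-ranks of free modules. The only tactical divergences are in (2), where the paper writes $\Psi(\textbf{c})=\alpha(X)\Psi(\textbf{c}^*)$ with $\Psi(\textbf{c}^*)$ monic and $\pi(\alpha)=1$ via \cite[Theorem~XIII.6]{McD74} instead of your CRT argument with the basic-irreducible Hensel-lift factors, and in (4), where the paper imports the field-level identity $\sum_i \pi(\mathcal{P}_i)=\mathcal{P}(\mathbb{F}_{p^m};n;\gcd(\pi(g_i)))$ from \cite[Lemma~5]{Bla17} rather than proving it by the induction and dimension count you outline; your explicit justification of the final step (two free $S$-submodules of equal rank, one inside the other, must coincide) is more detailed than the paper's, which takes this for granted.
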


\begin{proof}

From Proposition~\ref{s3p1} (1) is straight forward. In order to
prove (2) we note that $g_i$ divides $\mu(g_1,\cdots,g_u)$ for
each $i,$  thus from (1) it follows that
$\mathcal{P}_i\supseteq\mathcal{P}(S;n;\mu(g_1,\cdots,g_u))$ for
each $i$ and hence we have the inclusion
$\bigcap\limits_{i=1}^u\mathcal{P}_i\supseteq\mathcal{P}(S;n;\mu(g_1,\cdots,g_u)).$
Conversely, if $\textbf{c}\in
\bigcap\limits_{i=1}^u\mathcal{P}_i$ then by
Proposition~\ref{s3p1}  $\Psi(\textbf{c})$ is a common multiple
of $g_1,\cdots,g_u.$

From \cite[Theorem XIII.6]{McD74}, there are a polynomial over $S$
and $\textbf{c}^*\in S^n$ such that
$\Psi(\textbf{c})=\alpha(X)\Psi(\textbf{c}^*)$
$\Psi(\textbf{c}^*)$ is monic, and  $\pi(\alpha(X))=1.$ Therefore
$\Psi(\textbf{c}^*)$ is also a common multiple of $g_1,\cdots,g_u$
and $\mu(g_1,\cdots,g_u)$ divides $\Psi(\textbf{c}^*).$ Since
$\Psi(\textbf{c})=\alpha(X)\Psi(\textbf{c}^*)$ it follows  that
$\bigcap\limits_{i=1}^u\mathcal{P}_i\subseteq\mathcal{P}(S;n;\mu(g_1,\cdots,g_u))$
proving (3).

Since $\delta(g_1,\cdots,g_u)$ divides $g_i$ for each $i$ by
Proposition~\ref{s3p1}
$\mathcal{P}_i\subseteq\mathcal{P}(S;n;\delta(g_1,\cdots,g_u))$
for each $,$ and since $\mathcal{P}(S;n;\delta(g_1,\cdots,g_u))$
is a submodule of $S^n$ we have
$\sum\limits_{i=1}^u\mathcal{P}_i\subseteq\mathcal{P}(S;n;\delta(g_1,\cdots,g_u)).$
Thus to prove (4) it is enough to show that
$\texttt{rank}_S\left(\sum\limits_{i=1}^u\mathcal{P}_i\right)=\texttt{rank}_S(\mathcal{P}(S;n;\delta(g_1,\cdots,g_u))),$
if $\texttt{deg}(\mu(g_1,\cdots,g_u)\leq n$ and
$\sum\limits_{i=1}^u\mathcal{P}_i$ is free. Since
$\texttt{deg}(\texttt{gcd}(\pi(g_1),\cdots,\pi(g_u)))=\texttt{deg}(\mu(g_1,\cdots,g_u)\leq
n$ by \cite[Lemma 5]{Bla17} it follows that
$$\sum\limits_{i=1}^u\pi(\mathcal{P}_i)=\mathcal{P}(\mathbb{F}_{p^m};n;\texttt{gcd}(\pi(g_1),\cdots,\pi(g_u))),$$
and
$$\texttt{rank}_S\left(\sum\limits_{i=1}^u\mathcal{P}_i\right)=\texttt{dim}_{\mathbb{F}_{p^m}}\left(\pi\left(\sum\limits_{i=1}^u\mathcal{P}_i\right)\right).$$
Hence
$\texttt{rank}_S\left(\mathcal{P}(S;n;\delta(g_1,\cdots,g_u))\right)=\texttt{dim}_{\mathbb{F}_{p^m}}\left(\mathcal{P}(\mathbb{F}_{p^m};n;\texttt{gcd}(\pi(g_1),\cdots,\pi(g_u))\right)$
and this completes the proof.
\end{proof}

\subsection{Non-free polycyclic codes}

Let $C$ be a non-free non-zero  ${\textbf{a}}$-cyclic code  of length $n$ over
$S,$ and let $g_{0}$ be the polynomial $X^n-\Psi(\,\textbf{a}).$

\begin{Definition}
A subset $\left\{\theta^{\lambda_1}g_1\,;\,\theta^{\lambda_2}
g_2,\cdots,\theta^{\lambda_u}g_u\right\}$ of $S[X]$ (for some
$1\leq u< s $) is a \emph{strong Gröbner basis} for $C$ if
$\Psi(C)=\left\langle\theta^{\lambda_1}g_1\,;\,\theta^{\lambda_2}
g_2,\cdots,\theta^{\lambda_u}g_u\right\rangle$ and
\begin{enumerate}
    \item $0\leq\lambda_1<\lambda_2<\cdots<\lambda_u<s;$
    \item for $1\leq i\leq u,$ $g_i$ is monic;
    \item $n>\texttt{deg}(g_1)>\texttt{deg}(g_2)>\cdots>\texttt{deg}(g_u)> 0;$
    \item for $0\leq i\leq u,$  $\theta^{\lambda_{i+1}}g_i\in\left\langle\theta^{\lambda_{i+1}}g_{i+1},\theta^{\lambda_{i+2}}g_{i+2},\cdots,\theta^{\lambda_u}g_u\right\rangle$ in $S[X].$
\end{enumerate}
\end{Definition}

 \begin{Lemma}(\cite[Theorem 4.2]{Sal03}.) Every non-free, non-zero polycyclic code over
$S$ admits a strong Gröbner basis.
\end{Lemma}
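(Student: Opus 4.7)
The plan is to read off the strong Gr\"obner basis from the $\theta$-adic filtration of $\Psi(C)$ inside the quotient ring $R := S[X]/\langle X^n - \Psi(\,\textbf{a}\,)\rangle$. Because the $\textbf{a}$-period of $C$ is coprime to $p$, the polynomial $X^n - \pi(\Psi(\,\textbf{a}\,))$ is square-free in $\mathbb{F}_{p^m}[X]$; by the Hensel-lift statement recalled in Section~\ref{sec:2}, each monic divisor of $X^n - \pi(\Psi(\,\textbf{a}\,))$ lifts uniquely to a monic divisor of $X^n - \Psi(\,\textbf{a}\,)$ in $S[X]$, and the residue ring $\overline{R}:=\mathbb{F}_{p^m}[X]/\langle X^n - \pi(\Psi(\,\textbf{a}\,))\rangle$ is a finite product of fields in which every ideal is principal.

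For each $0 \le \lambda < s$ I would consider the ideal
\[
J_\lambda \;:=\; \Bigl\{\pi(f)\in \overline{R} \;:\; f \in S[X],\ \theta^{\lambda} f \in \Psi(C)\Bigr\},
\]
and let $\bar{g}_\lambda$ denote its unique monic generator, which divides $X^n-\pi(\Psi(\,\textbf{a}\,))$. The chain $J_0\subseteq J_1\subseteq\cdots\subseteq J_{s-1}$ translates into the divisibility $\bar{g}_{s-1}\mid\cdots\mid\bar{g}_0$, so the degrees are non-increasing. Setting $J_{-1}:=\{0\}$ as a convention, let $0\le\lambda_1<\cdots<\lambda_u<s$ be the strict-growth indices, and let $g_i\in S[X]$ be the Hensel lift of $\bar{g}_i:=\bar{g}_{\lambda_i}$. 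Square-freeness of $X^n - \pi(\Psi(\,\textbf{a}\,))$ together with uniqueness of Hensel lifts of coprime factorisations forces $g_{i+1}\mid g_i$ in $S[X]$ and gives the strict degree inequalities. The non-triviality of $C$ ensures $\deg g_1<n$, and the non-freeness of $C$ yields $u\ge 1$ with $\lambda_u<s$.

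Properties (1)--(3) are then immediate from the construction. Property (4) reduces to the divisibility $g_i = g_{i+1} h_i$: it gives $\theta^{\lambda_{i+1}} g_i = h_i\,(\theta^{\lambda_{i+1}} g_{i+1})$, with the boundary cases $i=0$ (via $g_1\mid X^n - \Psi(\,\textbf{a}\,)$) and $i=u$ (via $\theta^s=0$) handled in the same spirit. The core assertion is the ideal equality $\Psi(C) = \langle\theta^{\lambda_1} g_1,\ldots,\theta^{\lambda_u} g_u\rangle$. For the inclusion $\supseteq$, one picks for each $i$ a preimage $f_i \in \Psi(C)$ realising the generator of $J_{\lambda_i}$, namely $f_i = \theta^{\lambda_i}\tilde{g}_i + (\text{higher $\theta$-order})$ with $\pi(\tilde{g}_i)=\bar{g}_i$, and iteratively subtracts multiples of the previously-constructed higher-index generators to replace the top piece $\tilde{g}_i$ by $g_i$ exactly. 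For $\subseteq$, one uses $\theta$-adic descent: given $f\in\Psi(C)$ with $\theta$-valuation $\lambda\ge\lambda_1$, write $f=\theta^\lambda f'$; then $\pi(f')\in J_\lambda=\langle\bar{g}_{\lambda_{i(\lambda)}}\rangle$ where $\lambda_{i(\lambda)}$ is the largest jump $\le\lambda$, so subtracting an appropriate multiple of $\theta^{\lambda_{i(\lambda)}} g_{i(\lambda)}$ strictly raises the $\theta$-valuation of $f$, and the process terminates within $s$ steps by nilpotency of $\theta$.

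The main obstacle is the $\theta$-adic bookkeeping in both inclusions: each correction step must stay inside $\Psi(C)$, respect the degree bound coming from reducing modulo $X^n - \Psi(\,\textbf{a}\,)$, and genuinely raise the valuation. A secondary delicate point concerns the condition $\deg(g_u)>0$ in (3): if the filtration $\{\bar{g}_\lambda\}$ reaches the unit ideal at some $\lambda_u<s$ (equivalently, $\theta^{\lambda_u}\in\Psi(C)$), one gets $g_u=1$ of degree $0$, so either the stated condition must be read as $\deg(g_u)\ge 0$, or the sequence has to be truncated at the largest jump producing a non-constant divisor with the residual ``saturated'' portion absorbed separately.
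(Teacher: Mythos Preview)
The paper does not prove this lemma at all; it simply records it as a citation of \cite[Theorem~4.2]{Sal03}. So there is no ``paper's proof'' to compare against, only the Norton--Salagean argument lying behind that citation.

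Your outline is a faithful reconstruction of that filtration argument, specialised to the paper's standing hypothesis that the $\textbf{a}$-period is coprime to $p$. Two comments. First, by taking each $g_i$ to be the Hensel lift of $\bar g_{\lambda_i}$ you manufacture a strong Gr\"obner basis with the extra property $g_u\mid g_{u-1}\mid\cdots\mid g_1\mid g_0$. This is legitimate here because $X^n-\pi(\Psi(\textbf{a}))$ is square-free, so $R\cong\prod_j S[X]/\langle f_j\rangle$ is a product of finite chain rings and a direct component-wise check shows $\theta^{\lambda_i}g_i\in\Psi(C)$. You should lean on that CRT verification rather than on your ``iterative subtraction'' for the inclusion $\supseteq$: as you wrote it, that step is circular, since the error $f_i-\theta^{\lambda_i}g_i$ is not known to lie in $\Psi(C)$, and without that you cannot appeal to the $J_\lambda$-structure to kill it with the higher-index generators. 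Note also that the Remark immediately following the lemma in the paper warns that \emph{in general} (i.e.\ outside the square-free regime) no strong Gr\"obner basis enjoys the divisibility $g_u\mid\cdots\mid g_0$; your construction does not contradict this, but it does make clear that you are using square-freeness in an essential way and are not reproving the cited theorem in its full generality.

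Second, your closing worry about the requirement $\deg(g_u)>0$ is on target: when $\theta^{\lambda_u}\in\Psi(C)$ one gets $g_u=1$, which violates clause~(3) of the definition as stated. This is a wrinkle in the paper's definition (or in its transcription of the one in \cite{Sal03}), not a defect in your argument.
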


\begin{Remark}
Condition (4) in the definition  implies that $\pi(g_u) |\pi(g_{u-1})| \cdots
|\pi(g_1)|\pi(g_0).$ Note that for polycyclic codes that condition cannot be
improved in general: there are polycyclic codes for which no
strong Gröbner basis
$\left\{\theta^{\lambda_1}g_1\,,\,\theta^{\lambda_2}
g_2,\cdots,\theta^{\lambda_u}g_u\right\}$ has the property $g_u
|g_{u-1}| \cdots |g_1|g_0$ (see \cite[Example 3.3]{Sal03}).
\end{Remark}

A strong Gröbner basis
$\left\{\theta^{\lambda_1}g_1\,,\,\theta^{\lambda_2}
g_2,\cdots,\theta^{\lambda_u}g_u\right\}$ for the code $C$ as
described above is not necessarily unique. However, if $k_i$ is
the degree of $g_i,$ then $u,$ $(k_1,k_2,\cdots,k_u)$ and
$(\lambda_1,\lambda_2,\cdots,\lambda_u)$ are uniquely determined
by the algebraic structure of the code $C.$ Therefore  the type of
$C$ is $(k_0,k_1,\cdots,k_{s-1})$ where $k_t=k_{i-1}-k_i$ if
$\lambda_i=t$ for some $i,$ $k_t=0$ otherwise, and
$|C|=\prod\limits_{i=1}^{u}p^{m(s-\lambda_i)(k_{i-1}-k_i)}.$

\begin{Remark}\label{rem3} Let $C$ be a non-free, non-zero  ${\textbf{a}}$-cyclic code over $S$ such that
$\left\{\theta^{\lambda_1}g_1\,,\,\theta^{\lambda_2}
g_2,\cdots,\theta^{\lambda_u}g_u\right\}$ is one of its strong
Gröbner bases. Let us  denote by  $k_i=\texttt{deg}(g_i)$ for
$i\in\{0,1,\cdots,u\}.$ Then a generator matrix for $C$ is
\begin{align}\left(
\begin{array}{c}
  \theta^{\lambda_0}\mathrm{M}_{k_0-k_1}(g_1) \\
  \theta^{\lambda_1}\mathrm{M}_{k_1-k_2}(g_2)\\
  \vdots\\
\theta^{\lambda_u}\mathrm{M}_{k_{u}-k_{u-1}}(g_u)
\end{array}%
\right).\end{align}
 \end{Remark}

\begin{Example}\label{ex1} Consider the ring  $S=\mathbb{Z}_4[\alpha]$ where
$\alpha^2=3\alpha+3$ and  its residue field
$\mathbb{F}_4=\mathbb{F}_2(\beta)$ where  $\beta^2=\beta+1.$ The
factorization of $X^5-\beta$  is the product of irreducible
polynomials in $\mathbb{F}_4[X]$ given by
 $$X^5-\beta=(X-\beta^2)(X^2+ X+\beta)(X^2+\beta
X+\beta).$$  Let us take $g_2=X^2+ X+\alpha,$
$g_1=X^4+\alpha^2X^3+\alpha X^2+X+\alpha^2$ and
$g_0=(X-\alpha^2)g_1+2(X+3)g_2=X^5+2X^3+2\alpha^2 X+\alpha$ in
$S[X].$  Since $2g_1=2g_2(X^2+\alpha X+\alpha)$ if follows that
the code $C$ with Gröbner basis $\left\{g_1\,,\,2 g_2\right\}$ is
an $\textbf{a}$-cyclic code of length $5$ whose
$\textbf{a}:=(\alpha,2\alpha^2,0,2)$ and
$$
\left(%
\begin{array}{ccccc}
  \alpha^2 & 1 & \alpha & \alpha^2 & 1 \\
  2\alpha & 2 & 2 & 0 & 0 \\
  0 & 2\alpha & 2 & 2 & 0
\end{array}%
\right)
$$ is  one of its generator matrices.
\end{Example}

\subsection{Annihilator dual of a linear code}

The \emph{Euclidian product} on $S^n$ is  $\langle
\textbf{x}\,;\,\textbf{y}\rangle=
\textbf{x}\cdot\textbf{y}^{\texttt{tr}},$ where $\textbf{x},\textbf{y}\in S^n.$  and
$\textbf{y}^{\texttt{tr}},$  the transpose of $\textbf{y}\in S^n$. The Euclidean orthogonal code $C^{\perp}$ of
an $S$-linear code $C$ of length $n,$ is defined  as
$C^{\perp}:=\left\{\textbf{y}\in S^n\;:\; \langle
\textbf{y}\,;\,\textbf{c}\rangle=0_{S}\;\text{ for all }
\textbf{c}\in C \right\}.$   $C^{\perp}$ is an $S$-linear
code of length $n$ and from \cite[Theorem 3.10(iii)]{NS00},
$\left(C^{\perp}\right)^{\perp}=C$ and
$|C^{\perp}|\times|C|=|S|^n.$

\begin{Definition} An $S$-linear code $C$
is ${\textbf{a}}$-\emph{sequential} if
$C\cdot\mathrm{D}_{\textbf{a}}^{\texttt{tr}} =C$ where
$C\cdot\mathrm{D}_{\textbf{a}}^{\texttt{tr}}=\left\{\textbf{c}\mathrm{D}_{\textbf{a}}^{\texttt{tr}}\,:\,\textbf{c}\in
C\right\}$ and $\mathrm{D}_{\textbf{a}}^{\texttt{tr}}$ is the
transpose of $\mathrm{D}_{\textbf{a}}.$
\end{Definition}

\begin{Proposition} Any $S$-linear code $C$ of length $n$ is
 ${\textbf{a}}$-cyclic if and only if  $C^{\perp}$ is
${\textbf{a}}$-sequential.
\end{Proposition}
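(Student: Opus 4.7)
The plan is to exploit the transpose duality: for any vectors $\mathbf{x},\mathbf{y}\in S^n$ and any $n\times n$ matrix $M$, we have $\langle \mathbf{y} M^{\mathtt{tr}}, \mathbf{x}\rangle = \mathbf{y} M^{\mathtt{tr}} \mathbf{x}^{\mathtt{tr}} = \mathbf{y}(\mathbf{x} M)^{\mathtt{tr}} = \langle \mathbf{y}, \mathbf{x} M\rangle$. Applied to $M = \mathrm{D}_{\mathbf{a}}$, this single identity is the engine of the whole proof, because it translates invariance of $C$ under right multiplication by $\mathrm{D}_{\mathbf{a}}$ into invariance of $C^{\perp}$ under right multiplication by $\mathrm{D}_{\mathbf{a}}^{\mathtt{tr}}$.

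For the forward direction, I would start with $C$ being $\mathbf{a}$-cyclic, so $C\mathrm{D}_{\mathbf{a}}\subseteq C$. Taking any $\mathbf{y}\in C^{\perp}$ and any $\mathbf{c}\in C$, the adjoint identity gives $\langle \mathbf{y}\mathrm{D}_{\mathbf{a}}^{\mathtt{tr}},\mathbf{c}\rangle = \langle \mathbf{y},\mathbf{c}\mathrm{D}_{\mathbf{a}}\rangle = 0$, so $\mathbf{y}\mathrm{D}_{\mathbf{a}}^{\mathtt{tr}}\in C^{\perp}$ and therefore $C^{\perp}\mathrm{D}_{\mathbf{a}}^{\mathtt{tr}}\subseteq C^{\perp}$. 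Here is the subtle point that needs to be addressed: the definition of sequential demands equality, not mere inclusion. To upgrade the inclusion to equality I would use the standing hypothesis $\mathbf{a}\in S^{\times}\times S^{n-1}$, which makes $\det(\mathrm{D}_{\mathbf{a}})=\pm a_0$ a unit of $S$, so $\mathrm{D}_{\mathbf{a}}^{\mathtt{tr}}$ is invertible and the map $\mathbf{y}\mapsto \mathbf{y}\mathrm{D}_{\mathbf{a}}^{\mathtt{tr}}$ is an $S$-module automorphism of $S^n$. Consequently $|C^{\perp}\mathrm{D}_{\mathbf{a}}^{\mathtt{tr}}|=|C^{\perp}|$, and since $S^n$ is finite, the inclusion forces equality, giving $\mathbf{a}$-sequentiality of $C^{\perp}$.

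For the converse, I would start from the hypothesis $C^{\perp}\mathrm{D}_{\mathbf{a}}^{\mathtt{tr}}=C^{\perp}$. For $\mathbf{c}\in C$ and arbitrary $\mathbf{y}\in C^{\perp}$, the element $\mathbf{y}\mathrm{D}_{\mathbf{a}}^{\mathtt{tr}}$ still lies in $C^{\perp}$, so by the adjoint identity $\langle \mathbf{y},\mathbf{c}\mathrm{D}_{\mathbf{a}}\rangle = \langle \mathbf{y}\mathrm{D}_{\mathbf{a}}^{\mathtt{tr}},\mathbf{c}\rangle = 0$. Thus $\mathbf{c}\mathrm{D}_{\mathbf{a}}$ is orthogonal to every element of $C^{\perp}$, so $\mathbf{c}\mathrm{D}_{\mathbf{a}}\in (C^{\perp})^{\perp}=C$ by the double-annihilator identity from \cite[Theorem 3.10(iii)]{NS00}. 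This yields $C\mathrm{D}_{\mathbf{a}}\subseteq C$, which is exactly $\mathbf{a}$-cyclicity.

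The only real obstacle is the inclusion-versus-equality issue in the sequential definition; once one notices that the assumption $a_0\in S^{\times}$ is exactly what is needed to make $\mathrm{D}_{\mathbf{a}}$ a unit in the matrix ring, the proof becomes a two-line application of the transpose-adjoint identity in each direction.
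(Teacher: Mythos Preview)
Your proposal is correct and follows essentially the same route as the paper: both proofs rest on the transpose--adjoint identity $\langle \mathbf{y}\mathrm{D}_{\mathbf{a}}^{\mathtt{tr}},\mathbf{c}\rangle=\langle \mathbf{y},\mathbf{c}\mathrm{D}_{\mathbf{a}}\rangle$, use invertibility of $\mathrm{D}_{\mathbf{a}}$ (via $a_0\in S^\times$) to upgrade inclusion to equality, and invoke $(C^{\perp})^{\perp}=C$ for the converse. The only cosmetic difference is that the paper packages the converse as the identity $(C^{\perp}\mathrm{D}_{\mathbf{a}}^{\mathtt{tr}})^{\perp}=(C^{\perp})^{\perp}\mathrm{D}_{\mathbf{a}}$, whereas you argue elementwise; the content is the same.
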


\begin{proof} Assume that $C\cdot\mathrm{D}_{\textbf{a}}^{\texttt{tr}}=C.$ If $\textbf{y}\in C^{\perp}$ then
$\left\langle\textbf{y}\,;\,\left(\textbf{c}\mathrm{D}_{\textbf{a}}\right)^{\texttt{tr}}\right\rangle
=\left(\textbf{y}\mathrm{D}_{{\textbf{a}}}^{\texttt{tr}}\right)\cdot\textbf{c}^{\texttt{tr}}=0_S$
for all $\textbf{c}\in C$ and therefore $\textbf{y}\mathrm{D}_{{\textbf{a}}}^{\texttt{tr}}\in C^{\perp}$
 and
$C^{\perp}\cdot\mathrm{D}_{{\textbf{a}}}^{\texttt{tr}}\subseteq
C^{\perp}.$ Since $\mathrm{D}_{{\textbf{a}}}$ is invertible, we
have $C^{\perp}\cdot\mathrm{D}_{{\textbf{a}}}^{\texttt{tr}}=
C^{\perp}.$ Conversely, we have
$\left(C^{\perp}\cdot\mathrm{D}_{{\textbf{a}}}^{\texttt{tr}}\right)^{\perp}=
C,$ since $\left(C^{\perp}\right)^{\perp}=C.$ Thus it follows
$\left(C^{\perp}\cdot\mathrm{D}_{{\textbf{a}}}^{\texttt{tr}}\right)^{\perp}=\left(C^{\perp}\right)^{\perp}\cdot\mathrm{D}_{{\textbf{a}}}$
 and $C\cdot\mathrm{D}_{\textbf{a}}=C.$
\end{proof}

\begin{Remark} According to \cite[Theorem 3.5.]{Lop09}, an ${\textbf{a}}$-sequential code of length $n$
is identified to an ideal in
$S[X]/\langle\,X^n-\Psi(\textbf{a})\,\rangle$ if and only if it is
constacyclic.
\end{Remark}

For a polynomial $\rho\in S[X]$ we will denote by
$\{\rho\}_{\textbf{a}}$ the remainder of the Euclidian division of
$\rho$ by $X^n-\Psi(\textbf{a}).$ We introduce the inner-product
on $S^n$ defined by: \begin{align}\label{in1}\langle
\textbf{u}\,;\,\textbf{v}\rangle_{\textbf{a}}=
\left(\left\{\Psi(\textbf{u})\Psi(\textbf{v})\right\}_{\textbf{a}}\right)(0_S).\end{align}

\begin{Lemma} The inner-product (\ref{in1})
is a nondegenerate symmetric bilinear form.
\end{Lemma}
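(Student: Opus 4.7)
The plan is to verify the three defining properties (bilinearity, symmetry, nondegeneracy) separately, with nondegeneracy being the only nontrivial step. Bilinearity follows by stacking $S$-linear operations: the map $\Psi$ is $S$-linear by construction, multiplication in $S[X]$ is bilinear, the Euclidean remainder $\rho\mapsto\{\rho\}_{\textbf{a}}$ modulo the \emph{monic} polynomial $X^n-\Psi(\textbf{a})$ is an $S$-linear surjection from $S[X]$ onto the free $S$-submodule of polynomials of degree less than $n$, and evaluation at $0_S$ is $S$-linear. Symmetry is immediate from commutativity of multiplication in $S[X]$: since $\Psi(\textbf{u})\Psi(\textbf{v})=\Psi(\textbf{v})\Psi(\textbf{u})$, the two remainders coincide and so do their constant terms.

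For nondegeneracy I would compute the Gram matrix $M=(M_{ij})_{0\le i,j\le n-1}$ of the form in the standard basis $(e_0,\ldots,e_{n-1})$ of $S^n$, where $e_j$ has a $1$ in position $j$ and zeros elsewhere, so that $\Psi(e_j)=X^j$. Then $M_{ij}=\{X^{i+j}\}_{\textbf{a}}(0_S)$. When $i+j\le n-1$ the monomial $X^{i+j}$ is already its own remainder modulo $X^n-\Psi(\textbf{a})$, hence $M_{ij}=0$ unless $i=j=0$, in which case $M_{00}=1$. When $i+j=n$ the remainder is exactly $\Psi(\textbf{a})$, whose constant term is $a_0$, so $M_{ij}=a_0$. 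The entries with $i+j>n$ are some elements of $S$ whose explicit values will not be needed for the argument.

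Consequently, row $0$ of $M$ equals $(1,0,\ldots,0)$, and for each $1\le i\le n-1$ row $i$ is zero in columns $0,\ldots,n-i-1$, equals $a_0$ in column $n-i$, and is arbitrary in columns $n-i+1,\ldots,n-1$. Applying the row permutation that reverses the order of rows $1,\ldots,n-1$ therefore produces an upper triangular matrix with diagonal $(1,a_0,a_0,\ldots,a_0)$, of determinant $a_0^{n-1}$. Hence $\det M=\pm a_0^{n-1}$, which is a unit in $S$ because the standing assumption $\textbf{a}\in S^\times\times S^{n-1}$ forces $a_0\in S^\times$. Thus $M$ is invertible over $S$, and the form is nondegenerate.

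The step requiring most care is identifying the shape of $M$: one must track that the entries with $i+j<n$ arise from unreduced monomials and vanish away from the $(0,0)$ slot, while those with $i+j=n$ come from exactly one reduction step and all equal $a_0$. The unit hypothesis $a_0\in S^\times$, already imposed on $\textbf{a}$-cyclic codes throughout the paper, is precisely the ingredient that makes the Gram determinant a unit; this is the essential arithmetic input for nondegeneracy, and without it the argument would fail.
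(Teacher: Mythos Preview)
Your argument is correct. Bilinearity and symmetry are dispatched cleanly, and your Gram matrix computation is accurate: the entries with $i+j<n$ are indeed $\delta_{i+j,0}$, the anti-diagonal $i+j=n$ is constantly $a_0$, and the row-reversal of rows $1,\dots,n-1$ yields an upper-triangular matrix with diagonal $(1,a_0,\dots,a_0)$, giving $\det M=\pm a_0^{\,n-1}\in S^\times$.

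The paper's proof takes a different, more compressed route. Rather than computing the Gram determinant, it observes that since $a_0\in S^\times$ the element $X$ is a unit in $S[X]/\langle X^n-\Psi(\textbf{a})\rangle$ (explicitly, $X\cdot a_0^{-1}(X^{n-1}-a_{n-1}X^{n-2}-\cdots-a_1)\equiv 1$), and then argues that if $\{\Psi(\textbf{c})X^i\}_{\textbf{a}}(0_S)=0$ for \emph{all} integers $i$ then $\textbf{c}=\textbf{0}$: one takes $i=0$ to get $c_0=0$, divides by $X$, and iterates. Your approach has the advantage of being entirely explicit and self-contained---the determinant calculation leaves nothing to the reader---and it also yields the Gram matrix in a form that is directly reusable later (the paper itself invokes this matrix in Lemma~\ref{circ0} and Corollary~\ref{circ}). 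The paper's approach is shorter and highlights the algebraic reason (invertibility of $X$ in the quotient) more transparently, but it is sketchier and leaves the iteration implicit. Both arguments hinge on the same arithmetic input, namely $a_0\in S^\times$.
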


\begin{proof} It is clear that  it is a symmetry bilinear form. Since $\textbf{a}\in
(S^\times)\times(S^{n-1})$ it follows that $1$ is the remainder
of the Euclidian division of $Xf$ by $X^n-\Psi(\textbf{a})$ for
some $f\in S[X].$ Thus
$\left(\left\{\Psi(\textbf{c})X^i\right\}_{\textbf{a}}\right)(0_S)=0_S$
for all integer $i$ if and only if $\textbf{c}=\textbf{0}.$
\end{proof}

The result in  \cite[Proposition 1]{Ala16} allows to write the
\emph{annihilator dual} $C^{\circ}$  of a $S$-linear code $C$
 as follows: $$C^{\circ}:=\left\{\textbf{y}\in
S^n\,:\,\langle\textbf{y}\,;\,\textbf{c}\rangle_{\textbf{a}}=0_S\text{
for all } \textbf{c}\in C\,\right\}.$$

\begin{Lemma}\label{circ0} Let $\mathrm{A}=(\langle
\textbf{e}_i\,;\,\textbf{e}_j\rangle_{\textbf{a}})$ be a matrix
with respect to a basis $\{\textbf{e}_i\;:\;1\leq i\leq n\}$  of the
$S$-module $S^n$ and $C$ an $S$-linear code.  Then
$C^{\circ}=(C\cdot\mathrm{A})^\perp$ where
$C\cdot\mathrm{A}:=\{\textbf{c}\mathrm{A}\;:\;\textbf{c}\in C\}.$
\end{Lemma}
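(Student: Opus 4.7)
The plan is to unwind both sides via the Gram matrix of the bilinear form $\langle\cdot\,;\,\cdot\rangle_{\textbf{a}}$ and reduce the equality $C^{\circ}=(C\cdot\mathrm{A})^{\perp}$ to the symmetry of $\mathrm{A}$, which was already observed in the preceding lemma.

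First I would express the form in matrix language. For any $\textbf{y}=\sum_{i=1}^{n}y_i\textbf{e}_i$ and $\textbf{c}=\sum_{j=1}^{n}c_j\textbf{e}_j$ in $S^n$, bilinearity of $\langle\cdot\,;\,\cdot\rangle_{\textbf{a}}$ gives
\[
\langle\textbf{y}\,;\,\textbf{c}\rangle_{\textbf{a}}=\sum_{i,j}y_i c_j\,\langle\textbf{e}_i\,;\,\textbf{e}_j\rangle_{\textbf{a}}=\textbf{y}\,\mathrm{A}\,\textbf{c}^{\texttt{tr}},
\]
where the $y_i, c_j$ are the coordinate vectors in the fixed basis. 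Thus $\textbf{y}\in C^{\circ}$ if and only if $\textbf{y}\,\mathrm{A}\,\textbf{c}^{\texttt{tr}}=0_S$ for every $\textbf{c}\in C$.

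Next I would rewrite the Euclidean orthogonality condition the same way. By definition, $\textbf{y}\in(C\cdot\mathrm{A})^{\perp}$ precisely when $\textbf{y}\cdot(\textbf{c}\mathrm{A})^{\texttt{tr}}=0_S$ for every $\textbf{c}\in C$, i.e.\ when $\textbf{y}\,\mathrm{A}^{\texttt{tr}}\,\textbf{c}^{\texttt{tr}}=0_S$ for all $\textbf{c}\in C$.

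The key step is now to apply symmetry: the preceding lemma tells us that $\langle\cdot\,;\,\cdot\rangle_{\textbf{a}}$ is a symmetric bilinear form, so $\langle\textbf{e}_i\,;\,\textbf{e}_j\rangle_{\textbf{a}}=\langle\textbf{e}_j\,;\,\textbf{e}_i\rangle_{\textbf{a}}$ for all $i,j$, which means $\mathrm{A}=\mathrm{A}^{\texttt{tr}}$. Substituting this into the previous characterization yields the equivalence $\textbf{y}\in(C\cdot\mathrm{A})^{\perp}\Longleftrightarrow\textbf{y}\,\mathrm{A}\,\textbf{c}^{\texttt{tr}}=0_S$ for all $\textbf{c}\in C$, which is exactly the condition for $\textbf{y}\in C^{\circ}$. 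Hence the two sets coincide.

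There is essentially no obstacle beyond bookkeeping: the proof is a three-line manipulation once the Gram matrix is introduced. The only thing one should be careful about is that the basis $\{\textbf{e}_i\}$ is arbitrary, so it is worth stating explicitly that the matrix product $\textbf{y}\,\mathrm{A}\,\textbf{c}^{\texttt{tr}}$ is being computed in coordinates relative to this basis; the conclusion is basis-independent because both $C^{\circ}$ and $(C\cdot\mathrm{A})^{\perp}$ are intrinsic subsets of $S^n$.
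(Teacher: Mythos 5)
Your proof is correct and is essentially the paper's own argument: both express $\langle\textbf{y}\,;\,\textbf{c}\rangle_{\textbf{a}}$ through the Gram matrix as $\textbf{y}\mathrm{A}\textbf{c}^{\texttt{tr}}$ and identify this with the Euclidean pairing of $\textbf{y}$ against $\textbf{c}\mathrm{A}$, the only cosmetic difference being that you invoke the symmetry $\mathrm{A}=\mathrm{A}^{\texttt{tr}}$ explicitly, whereas the paper hides it inside the identity $\langle\textbf{u}\,;\,\textbf{v}\rangle_{\textbf{a}}=\textbf{u}\mathrm{A}\textbf{v}^{\texttt{tr}}=\langle\textbf{u}\,;\,\textbf{v}\mathrm{A}\rangle$. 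One minor caution: your closing remark about basis-independence glosses over the fact that the Euclidean product defining $\perp$ is computed in standard coordinates, so the clean identity really presupposes $\{\textbf{e}_i\}$ is the standard basis (an imprecision already present in the paper's statement, so it does not affect the comparison).
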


\begin{proof} Note that
$\langle\textbf{u}\,;\,\textbf{v}\rangle_{\textbf{a}}=\textbf{u}\mathrm{A}\textbf{v}^t=\langle
\textbf{u}\,;\,\textbf{v}\mathrm{A}\rangle_{\textbf{a}}$ where
$\textbf{v}^t$ is the transpose of $\textbf{v}.$ Thus
$$ C^{\circ} = \left\{\textbf{y}\in S^n\,:\,\langle\textbf{y}\,;\,\textbf{c}\mathrm{A}\rangle_{\textbf{a}}=0_S\text{ for all } \textbf{c}\in C\,\right\}
= \left\{\textbf{y}\in S^n\,:\,\langle
\textbf{y}\,;\,\textbf{c}\rangle=0_S\text{ for all } \textbf{c}\in
C\cdot\mathrm{A}\,\right\}= (C\cdot\mathrm{A})^\perp.$$
\end{proof}

\begin{Corollary}\label{circ} Let $C$ be an $S$-linear code. The following assertions hold.
\begin{enumerate}
    \item $C^\perp$ and $C^{\circ}$ have the same type;
    \item $\left(C^{\circ}\right)^{\circ}=C.$
\end{enumerate}
\end{Corollary}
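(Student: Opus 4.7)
The plan is to reduce everything to the Euclidean case via Lemma~\ref{circ0}. The cornerstone is the claim that the Gram matrix $\mathrm{A}=(\langle\textbf{e}_i\,;\,\textbf{e}_j\rangle_{\textbf{a}})$ of the bilinear form with respect to the standard basis of $S^n$ lies in $\texttt{GL}_n(S)$. Non-degeneracy of $\langle\cdot\,;\,\cdot\rangle_{\textbf{a}}$, established in the Lemma immediately preceding the corollary, says that the $S$-linear endomorphism $\textbf{c}\mapsto\textbf{c}\mathrm{A}$ of the finite module $S^n$ is injective; since $S^n$ is finite, an injective endomorphism is automatically bijective, so $\mathrm{A}$ is invertible and right-multiplication by $\mathrm{A}$ is an $S$-module automorphism of $S^n$.

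For assertion (1), write $C^{\circ}=(C\cdot\mathrm{A})^{\perp}$ via Lemma~\ref{circ0}. Because $\mathrm{A}$ is an automorphism, $C$ and $C\cdot\mathrm{A}$ are isomorphic as $S$-submodules of $S^n$ and hence share the same type. The type of a code over a finite chain ring completely determines the type of its Euclidean dual (via the reshuffling rule $(k_0,\ldots,k_{s-1})\mapsto(n-\sum_{i}k_i,\,k_{s-1},\ldots,k_1)$ implicit in \cite{NS00}), so $C^{\perp}$ and $(C\cdot\mathrm{A})^{\perp}=C^{\circ}$ carry the same type.

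For assertion (2), the symmetry of the bilinear form gives the inclusion $C\subseteq(C^{\circ})^{\circ}$ for free. From (1) together with $|C^{\perp}|\cdot|C|=|S|^n$ we deduce $|C^{\circ}|\cdot|C|=|S|^n$; applying the same identity with $C^{\circ}$ in place of $C$ yields $|(C^{\circ})^{\circ}|=|C|$, and the equal-cardinality finite inclusion collapses to equality. The only real subtlety is the invertibility of $\mathrm{A}$: one must translate the one-sided non-degeneracy statement of the preceding Lemma into injectivity of the right-multiplication map on $S^n$, after which finiteness of $S^n$ finishes the job and the rest is a short bookkeeping exercise.
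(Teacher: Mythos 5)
Your proposal is correct, and it fills in the same two pillars the paper relies on: the invertibility of the Gram matrix $\mathrm{A}$ (your argument via injectivity of $\textbf{c}\mapsto\textbf{c}\mathrm{A}$ plus finiteness of $S^n$ is a welcome elaboration of the paper's bare assertion that nondegeneracy makes $\mathrm{A}$ invertible) and the identity $C^{\circ}=(C\cdot\mathrm{A})^{\perp}$ of Lemma~\ref{circ0}. For assertion (1) you and the paper diverge only in the last step: the paper argues that $(C\cdot\mathrm{A})^{\perp}$ and $C^{\perp}$ are isomorphic $S$-modules, while you pass through the fact that $C\cong C\cdot\mathrm{A}$ have equal type and that the type of the Euclidean dual is a function of the type of the code (the reshuffling rule in \cite{NS00}); both are legitimate, and they use essentially the same input. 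For assertion (2) your route is genuinely different: the paper computes directly, $\left(C^{\circ}\right)^{\circ}=(C^{\circ}\cdot\mathrm{A})^{\perp}=(C^{\circ})^{\perp}\cdot\mathrm{A}^{-1}=((C\cdot\mathrm{A})^{\perp})^{\perp}\cdot\mathrm{A}^{-1}=(C\cdot\mathrm{A})\mathrm{A}^{-1}=C$, which leans on the biduality $(D^{\perp})^{\perp}=D$ from \cite{NS00} and, tacitly, on the identity $(D\cdot\mathrm{A})^{\perp}=D^{\perp}\cdot\mathrm{A}^{-1}$ (valid because $\mathrm{A}$ is symmetric, though the paper does not spell this out); you instead get the inclusion $C\subseteq\left(C^{\circ}\right)^{\circ}$ from symmetry of the form and collapse it by a cardinality count, using part (1) together with $|C^{\perp}|\cdot|C|=|S|^n$. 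Your counting argument buys independence from the matrix-manipulation identity and is arguably more robust, at the price of invoking the type/cardinality machinery of \cite{NS00} twice; the paper's computation buys a shorter, purely structural derivation. Either way the proof is complete as you state it.
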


\begin{proof}  Let $\{\textbf{e}_i\;:\;1\leq i\leq n\}$ be a basis of the free $S$-module $S^n$ and $\mathrm{A}=(\langle
\textbf{e}_i\,;\,\textbf{e}_j\rangle_{\textbf{a}}).$
\begin{enumerate}
  \item The matrix $\mathrm{A}$ is invertible, since
$\langle
\textbf{u}\,;\,\textbf{v}\rangle_{\textbf{a}}$ is nondegenerate.
Hence the $S$-modules $(C\mathrm{A})^\perp$ and $C^\perp$ are
isomorphic and whence $C^\perp$ and $C^{\circ}$ have the same
type.
\item Using the equality $C^{\circ}=(C\cdot\mathrm{A})^\perp$ and
the fact that  $\mathrm{A}$ is invertible it follows that
\begin{eqnarray*}
  \left(C^{\circ}\right)^{\circ} &=& (C^{\circ}\cdot\mathrm{A})^\perp \\
    &=& (C^{\circ})^\perp(\mathrm{A}^{-1})  \\
    &=& ((C\cdot\mathrm{A})^\perp)^\perp\cdot(\mathrm{A}^{-1})=(CA)A^{-1}=C.
\end{eqnarray*}
\end{enumerate}
\end{proof}

\begin{Proposition} If $C$ be an ${\textbf{a}}$-cyclic code over
$S$ then $C^{\circ}$ is  an ${\textbf{a}}$-cyclic code.
\end{Proposition}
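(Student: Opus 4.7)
The plan is to show that $C^\circ$ is closed under the right action of $\mathrm{D}_{\textbf{a}}$, which by the definition of \textbf{a}-cyclicity is what we need. Everything will hinge on the fact that multiplying by $X$ modulo $X^n-\Psi(\textbf{a})$ corresponds, on the vector side, to right multiplication by $\mathrm{D}_{\textbf{a}}$, and that this operation is ``self-adjoint'' with respect to the bilinear form $\langle\cdot\,;\,\cdot\rangle_{\textbf{a}}$.

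First I would record the identity
$$\Psi(\textbf{u}\mathrm{D}_{\textbf{a}})=\bigl\{X\,\Psi(\textbf{u})\bigr\}_{\textbf{a}}$$
for every $\textbf{u}\in S^n$, which follows directly from the definition of $\mathrm{D}_{\textbf{a}}$ in (\ref{ab}) together with the fact that $X^n-\Psi(\textbf{a})$ is exactly the polynomial whose last row of $\mathrm{D}_{\textbf{a}}$ encodes. Combined with the definition (\ref{in1}) of $\langle\cdot\,;\,\cdot\rangle_{\textbf{a}}$, this yields the crucial shift-invariance
$$\langle \textbf{u}\mathrm{D}_{\textbf{a}}\,;\,\textbf{v}\rangle_{\textbf{a}}=\bigl(\bigl\{X\,\Psi(\textbf{u})\Psi(\textbf{v})\bigr\}_{\textbf{a}}\bigr)(0_S)=\langle \textbf{u}\,;\,\textbf{v}\mathrm{D}_{\textbf{a}}\rangle_{\textbf{a}}$$
for all $\textbf{u},\textbf{v}\in S^n$.

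Now take $\textbf{y}\in C^{\circ}$ and any $\textbf{c}\in C$. Since $C$ is \textbf{a}-cyclic, $\textbf{c}\mathrm{D}_{\textbf{a}}\in C$, and therefore
$$\langle \textbf{y}\mathrm{D}_{\textbf{a}}\,;\,\textbf{c}\rangle_{\textbf{a}}=\langle \textbf{y}\,;\,\textbf{c}\mathrm{D}_{\textbf{a}}\rangle_{\textbf{a}}=0_S.$$
Hence $\textbf{y}\mathrm{D}_{\textbf{a}}\in C^{\circ}$, so $C^{\circ}\cdot\mathrm{D}_{\textbf{a}}\subseteq C^{\circ}$. Because $a_0\in S^\times$, the matrix $\mathrm{D}_{\textbf{a}}$ is invertible, so $C^{\circ}\cdot\mathrm{D}_{\textbf{a}}$ and $C^{\circ}$ have the same cardinality, forcing equality, and $C^{\circ}$ is \textbf{a}-cyclic.

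The only subtle step is the shift-invariance identity; everything else is formal. I would be mildly careful that ``multiplying by $X$ then reducing mod $X^n-\Psi(\textbf{a})$'' genuinely matches the matrix $\mathrm{D}_{\textbf{a}}$ defined in (\ref{ab})—a coefficient-by-coefficient check that is essentially the same computation already used in the proof of Proposition~\ref{s3p1}(2). Once that matching is in hand, the bilinear form's definition as ``take constant term of the reduced product'' makes the adjoint property automatic, and the rest of the argument is a one-line consequence of $C$ being \textbf{a}-cyclic.
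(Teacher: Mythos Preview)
Your proof is correct and follows essentially the same route as the paper's: both rest on the observation that $\Psi(\textbf{y}\mathrm{D}_{\textbf{a}})\equiv X\,\Psi(\textbf{y})$ modulo $X^n-\Psi(\textbf{a})$, which makes the shift $\mathrm{D}_{\textbf{a}}$ self-adjoint for $\langle\cdot\,;\,\cdot\rangle_{\textbf{a}}$. The paper packages this as the set identity $C^{\circ}\cdot\mathrm{D}_{\textbf{a}}=(C\cdot\mathrm{D}_{\textbf{a}})^{\circ}$ and then applies $C\cdot\mathrm{D}_{\textbf{a}}=C$, whereas you write out the adjoint identity $\langle \textbf{u}\mathrm{D}_{\textbf{a}}\,;\,\textbf{v}\rangle_{\textbf{a}}=\langle \textbf{u}\,;\,\textbf{v}\mathrm{D}_{\textbf{a}}\rangle_{\textbf{a}}$ explicitly and then close with the invertibility/cardinality argument; the underlying mechanism is identical.
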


\begin{proof}
$X\Psi(\textbf{y})-\Psi(\textbf{y}\mathrm{D}_{\textbf{a}})\in\langle\,X^n-\Psi(\textbf{a)}\,\rangle$
for all $\textbf{y}\in S^n$ therefore it follows that
$C^{\circ}\cdot\mathrm{D}_{\textbf{a}}=(C\cdot\mathrm{D}_{\textbf{a}})^{\circ}.$
By hypothesis $C\cdot\mathrm{D}_{\textbf{a}}=C$ and hence we have
that $C^{\circ}\cdot\mathrm{D}_{\textbf{a}}=C^{\circ}.$
\end{proof}

\begin{Proposition} Let $C:=\mathcal{P}(S;n;g)$ be a nonzero ${\textbf{a}}$-cyclic code and $h$ a monic polynomial such that
$gh=X^n-\Psi(\textbf{a}).$ Then
$(\mathcal{P}(S;n;g))^{\circ}=\mathcal{P}(S;n;h).$
\end{Proposition}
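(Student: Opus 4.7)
The plan is to prove $C^{\circ} = \mathcal{P}(S;n;h)$ by establishing one inclusion via direct verification and the reverse by a cardinality count. First I would confirm that $\mathcal{P}(S;n;h)$ is a well-defined free polycyclic code. From $gh = X^n - \Psi(\textbf{a})$ together with the monicity of $g$ and of $X^n - \Psi(\textbf{a})$, we deduce that $h$ is monic of degree $n - \deg(g)$; and from $g_0 h_0 = -a_0 \in S^\times$ (since $\textbf{a}\in S^\times\times S^{n-1}$), we get $h_0 \in S^\times$. Hence $\mathcal{P}(S;n;h)$ is a free polycyclic code of rank $\deg(g)$.

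For the inclusion $\mathcal{P}(S;n;h) \subseteq C^{\circ}$ I would apply Proposition \ref{s3p1}(1): for any $\textbf{y} \in \mathcal{P}(S;n;h)$ and $\textbf{c} \in C$, one can write $\Psi(\textbf{y}) = hp$ and $\Psi(\textbf{c}) = gq$ in $S[X]$, so that
$$\Psi(\textbf{y})\Psi(\textbf{c}) = gh \cdot pq = (X^n - \Psi(\textbf{a}))\,pq.$$
Therefore $\{\Psi(\textbf{y})\Psi(\textbf{c})\}_{\textbf{a}} = 0$, and so $\langle \textbf{y}\,;\,\textbf{c}\rangle_{\textbf{a}} = 0$, giving $\textbf{y} \in C^{\circ}$.

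To promote this to equality, I would match cardinalities. Since $C = \mathcal{P}(S;n;g)$ is free of rank $n - \deg(g)$, we have $|C| = |S|^{n - \deg(g)}$. Corollary \ref{circ} tells us $C^{\circ}$ and $C^{\perp}$ have the same type and hence the same cardinality, so $|C^{\circ}| = |S|^n / |C| = |S|^{\deg(g)}$. On the other hand $|\mathcal{P}(S;n;h)| = |S|^{n - \deg(h)} = |S|^{\deg(g)}$, so the inclusion is an equality. I do not anticipate any real obstacle: the factorization $gh = X^n - \Psi(\textbf{a})$ makes the verification direction fall out immediately, and the rank formula for free polycyclic codes combined with Corollary \ref{circ} supplies the cardinality match without further work.
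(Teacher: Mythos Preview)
Your argument is correct, but it differs from the paper's. The paper establishes both inclusions directly: the easy inclusion $\mathcal{P}(S;n;h)\subseteq C^{\circ}$ is obtained exactly as you do, while for the reverse inclusion it takes $\textbf{y}\in C^{\circ}$, invokes \cite[Theorem XIII.6]{McD74} to write $\Psi(\textbf{y})=\theta^{t}\alpha(X)\Psi(\textbf{y}^{*})$ with $\alpha(X)$ a unit and $\Psi(\textbf{y}^{*})$ monic, and then argues that $\Psi(\textbf{y}^{*})g\in\langle gh\rangle$, so that $h\mid\Psi(\textbf{y}^{*})$ and hence $h\mid\Psi(\textbf{y})$. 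Your route via cardinalities sidesteps this divisibility analysis entirely: once the easy inclusion is in hand, Corollary~\ref{circ}(1) together with $|C^{\perp}|\cdot|C|=|S|^{n}$ and the freeness of both $\mathcal{P}(S;n;g)$ and $\mathcal{P}(S;n;h)$ forces equality. This is arguably cleaner and more robust, since it avoids the somewhat delicate cancellation step implicit in the paper's use of the McDonald decomposition; on the other hand, the paper's approach is more elementary in that it does not lean on the nondegeneracy machinery behind Corollary~\ref{circ}.
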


\begin{proof} Let $\textbf{y}\in S^n$,  there exist $t\in\mathbb{N},$ $\alpha(X)\in(S[X])^\times$ and $\textbf{y}^*\in S^n$
such that $\Psi(\textbf{y})=\theta^t\alpha(X)\Psi(\textbf{y}^*)$
and $\Psi(\textbf{y}^*)$ is monic (see \cite[Theorem
XIII.6]{McD74}). If $\textbf{y}\in(\mathcal{P}(S;n;g))^{\circ},$
then $\Psi(\textbf{y}^*)g\in\langle\,gh\,\rangle$ since
$gh=X^n-\Psi(\textbf{a}).$ $h$ divides
$\Psi(\textbf{y}^*)$ thus
$(\mathcal{P}(S;n;g))^{\circ}\subseteq \mathcal{P}(S;n;h).$
Conversely $\mathcal{P}(S;n;h)\subseteq
(\mathcal{P}(S;n;g))^{\circ}$ since $gh=X^n-\Psi(\textbf{a}).$
\end{proof}

\section{Galois variance of polycyclic codes}\label{sec:4}

In this section the integers $r$ and $d$ are two divisors of $m$ such that $m=dr.$

\subsection{Galois images of polycyclic codes}\label{susec:41}

Let $\sigma$ be a generator of $\texttt{Aut}(S).$ This generator
$\sigma$ naturally induces on the one hand  an automorphism of
$S[X]$ as follows:
$$\sigma(f)=\sum\limits_{i}\sigma(f_i)X^i,$$
for any $f=\sum\limits_{i}f_iX^i\in S[X]$ and the other hand an
automorphism of $S^n$ of the following way:
$$
\sigma(\,\textbf{c}\,)=\left(\sigma(c_0),\sigma(c_1),\cdots,\sigma(c_{n-1})\right),$$
for any $\textbf{c}:=(c_0,c_1,\cdots,c_{n-1})\in S^n$ respectively. Given a positive integer $i,$ the $S$-linear code
$\sigma^i(C):=\left\{ \sigma^i(\,\textbf{c}\,)\;:\;\textbf{c}\in
C\right\}$ is called \emph{image} of $C$ under $\sigma^i.$

\begin{Definition} Let $C$ be an $S$-linear code of length $n.$
\begin{enumerate}
    \item  The code $C$ is  $\langle\sigma^r\rangle$-\emph{disjoint} if $\sigma^{ir}(C)\cap C=\{\textbf{0}\},$ for all $1\leq i <d.$
     \item  The code $C$ is complete $\langle\sigma^r\rangle$-\emph{disjoint} if $S^n=C\oplus\sigma^r(C)\oplus\cdots\oplus\sigma^{r(d-1)}(C).$
\end{enumerate}
\end{Definition}

Note that any complete $\langle\sigma^r\rangle$-disjoint code
over $S$ is $\langle\sigma^r\rangle$-disjoint, on the other hand
if an $S$-linear code $C$ is $\langle\sigma^r\rangle$-disjoint
then all subcodes and $\langle\sigma^i\rangle$-images of $C$ are
also $\langle\sigma^i\rangle$-disjoint and from Remark~\ref{rem3}
we have the following result.

\begin{Proposition} Let $C$ be a non-free, non-zero  ${\textbf{a}}$-cyclic code over $S$ whose
$\left\{\theta^{\lambda_1}g_1\,,\,\theta^{\lambda_2}
g_2,\cdots,\theta^{\lambda_u}g_u\right\}$ is one of its strong
Gröbner bases. Then  $\sigma^i(C)$ is
$\sigma^i({\textbf{a}})$-cyclic, and
$\left\{\theta^{\lambda_i}\sigma^i(g_i)\,:\,1\leq i\leq u\right\}$
is a Gröbner basis for $\sigma^i(C),$ for all $0\leq i<m.$
\end{Proposition}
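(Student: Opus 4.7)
The plan is to establish both claims by transporting the original structure of $C$ through the coefficient-wise automorphism induced by $\sigma^i$. First I would observe that $\sigma^i\colon S\to S$ extends to a ring automorphism of $S[X]$ acting on coefficients, and that this extension intertwines $\Psi$ in the sense that $\sigma^i(\Psi(\textbf{c}))=\Psi(\sigma^i(\textbf{c}))$ for every $\textbf{c}\in S^n$. I would then collect three elementary preservation facts: $\sigma^i$ fixes $1_S$ (so monicness is preserved), $\sigma^i$ preserves degrees (because it is a unit automorphism of $S[X]$ and the leading coefficient is a unit), and $\sigma^i(\theta)=\theta$, the last point following directly from the explicit Frobenius formula $\sigma(\sum_{t} a_t\theta^t)=\sum_{t}a_t^{p}\theta^t$ recalled in Section~\ref{sec:2}. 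These bookkeeping points will carry essentially all of the work.

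For the first assertion, I would fix $\textbf{c}\in C$ and verify, coordinate by coordinate from the shape of $\mathrm{D}_{\textbf{a}}$ in \eqref{ab}, the identity
\begin{equation*}
\sigma^i(\textbf{c})\,\mathrm{D}_{\sigma^i(\textbf{a})}=\sigma^i(\textbf{c}\,\mathrm{D}_{\textbf{a}}).
\end{equation*}
Since $C\cdot\mathrm{D}_{\textbf{a}}=C$ by hypothesis, this immediately yields $\sigma^i(C)\cdot\mathrm{D}_{\sigma^i(\textbf{a})}=\sigma^i(C)$, so $\sigma^i(C)$ is $\sigma^i(\textbf{a})$-cyclic.

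For the Gr\"obner basis assertion, I would use that $\sigma^i(X^n-\Psi(\textbf{a}))=X^n-\Psi(\sigma^i(\textbf{a}))$ to induce a ring isomorphism
\begin{equation*}
\bar\sigma^i\colon S[X]/\langle\,X^n-\Psi(\textbf{a})\,\rangle\;\longrightarrow\;S[X]/\langle\,X^n-\Psi(\sigma^i(\textbf{a}))\,\rangle,
\end{equation*}
and then apply $\bar\sigma^i$ to the ideal equality $\Psi(C)=\langle\theta^{\lambda_1}g_1,\ldots,\theta^{\lambda_u}g_u\rangle$. Since $\sigma^i(\theta^{\lambda_j}g_j)=\theta^{\lambda_j}\sigma^i(g_j)$ by the preservation of $\theta$, this gives at once
\begin{equation*}
\Psi(\sigma^i(C))=\langle\theta^{\lambda_1}\sigma^i(g_1),\ldots,\theta^{\lambda_u}\sigma^i(g_u)\rangle.
\end{equation*}

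It then remains to verify the four defining conditions of a strong Gr\"obner basis for the transported generating set. Condition (1) is immediate because the exponents $\lambda_j$ are unchanged; conditions (2) and (3) follow from the preservation of monicness and of degree; condition (4) is obtained by pulling each membership $\theta^{\lambda_{j+1}}g_j\in\langle\theta^{\lambda_{j+1}}g_{j+1},\ldots,\theta^{\lambda_u}g_u\rangle$ through the ring automorphism $\bar\sigma^i$. The proof is almost entirely formal, and the only point that requires genuine attention is the identity $\sigma^i(\theta)=\theta$, which is exactly what guarantees that the structural prefactors $\theta^{\lambda_j}$ survive unchanged under the Galois action; once this is noted, everything else is a direct translation.
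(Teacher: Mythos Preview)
Your argument is correct. The paper itself does not give a formal proof of this proposition: it simply precedes the statement with the phrase ``from Remark~\ref{rem3} we have the following result,'' and leaves the verification to the reader. The implicit argument there is matrix-based: applying $\sigma^i$ entrywise to the generator matrix displayed in Remark~\ref{rem3} and using $\sigma^i(\theta)=\theta$ yields precisely the generator matrix associated with the data $\{\theta^{\lambda_j}\sigma^i(g_j)\}$ for the code $\sigma^i(C)$. Your route is instead ideal-theoretic: you transport the ideal description of $\Psi(C)$ through the induced ring isomorphism and then check the four axioms of a strong Gr\"obner basis directly. Both approaches hinge on the same observation, namely that $\sigma$ fixes $\theta$ (immediate from the Frobenius formula $\sigma(\sum_t a_t\theta^t)=\sum_t a_t^{p}\theta^t$), and both are entirely adequate; yours has the advantage of being fully written out and of making explicit why each axiom survives.

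One small notational slip: condition~(4) in the definition of a strong Gr\"obner basis is a membership statement in $S[X]$, not in the quotient $S[X]/\langle X^n-\Psi(\sigma^i(\textbf{a}))\rangle$, so you should pull it through the automorphism of $S[X]$ that you set up at the outset rather than through $\bar\sigma^i$. This does not affect the validity of the argument. You also handled correctly the index collision in the paper's statement (which uses $i$ both for the Galois exponent and for the running index in the basis); writing the basis as $\{\theta^{\lambda_j}\sigma^i(g_j):1\le j\le u\}$ is the intended reading.
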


\begin{Proposition}\label{pro4} The code $C:=\mathcal{P}(S;n;g)$ is $\langle\sigma^r\rangle$-disjoint if and only if
$\texttt{deg}(\mu(g,\sigma^{ir}(g)))\geq n,$  for $1\leq i\leq
d-1.$
\end{Proposition}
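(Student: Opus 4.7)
The plan is to reduce the disjointness condition to a statement about the polycyclic code cut out by the least common multiple of $g$ and its Galois conjugates, and then to read off the triviality condition from the degree of that LCM.

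First I would identify the Galois image $\sigma^{ir}(C)$ as a polycyclic code. Since $\sigma^{ir}$ extends to a ring automorphism of $S[X]$, the condition $g \mid \Psi(c)$ is equivalent to $\sigma^{ir}(g) \mid \sigma^{ir}(\Psi(c)) = \Psi(\sigma^{ir}(c))$. Combined with Proposition~\ref{s3p1}(1) this gives the identification
\[
\sigma^{ir}(C) \;=\; \mathcal{P}\bigl(S;n;\sigma^{ir}(g)\bigr).
\]
Note that $\sigma^{ir}(g)$ is monic with invertible constant term (since $\sigma^{ir}$ preserves $1$ and $S^\times$), has the same degree as $g$, and $\pi(\sigma^{ir}(g))=\texttt{Fr}_p^{ir}(\pi(g))$ has the same period as $\pi(g)$, so the Hensel lift and Lemma~\ref{lem3} are applicable.

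Next I would compute the pairwise intersection. By Lemma~\ref{lem3}(2),
\[
C \,\cap\, \sigma^{ir}(C) \;=\; \mathcal{P}(S;n;g)\,\cap\,\mathcal{P}\bigl(S;n;\sigma^{ir}(g)\bigr) \;=\; \mathcal{P}\bigl(S;n;\mu(g,\sigma^{ir}(g))\bigr),
\]
for each $1\le i\le d-1$. Thus $C$ is $\langle\sigma^r\rangle$-disjoint if and only if each of these $d-1$ polycyclic codes is the zero code.

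Finally I would invoke the convention stated just after the definition of $\mathcal{P}(S;n;g)$: the code is $\{\textbf{0}\}$ whenever the degree of the generator polynomial is at least $n$, while if $\deg(\mu(g,\sigma^{ir}(g))) < n$ then the matrix $\mathrm{M}_{n-\deg\mu}(\mu(g,\sigma^{ir}(g)))$ has $n-\deg\mu>0$ rows with an invertible leading coefficient, so its row span is a free code of positive rank. Therefore
\[
\mathcal{P}\bigl(S;n;\mu(g,\sigma^{ir}(g))\bigr)=\{\textbf{0}\} \;\;\Longleftrightarrow\;\; \deg\bigl(\mu(g,\sigma^{ir}(g))\bigr)\ge n,
\]
which, applied to each $i$ with $1\le i\le d-1$, gives the claimed equivalence.

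The main obstacle is really bookkeeping: one must verify the hypotheses of Lemma~\ref{lem3} for the pair $(g,\sigma^{ir}(g))$, in particular the period of $\pi(\sigma^{ir}(g))$ being coprime to $p$, and one must justify the ``only if'' direction of the degree-dichotomy for $\mathcal{P}(S;n;\,\cdot\,)$ from first principles since the paper states only the ``if'' half of this convention.
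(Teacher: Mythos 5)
Your proof is correct and takes essentially the same route as the paper: identify $\sigma^{ir}(C)$ with $\mathcal{P}(S;n;\sigma^{ir}(g))$, use Lemma~\ref{lem3}(2) to write $C\cap\sigma^{ir}(C)=\mathcal{P}(S;n;\mu(g,\sigma^{ir}(g)))$, and conclude from the fact that this code is zero exactly when the degree of $\mu(g,\sigma^{ir}(g))$ is at least $n$. You simply spell out details the paper leaves implicit (the Galois-image identification, the hypothesis checks for Lemma~\ref{lem3}, and the converse half of the degree dichotomy), which only strengthens the argument.
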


\begin{proof} Note that $C$ is $\langle\sigma^r\rangle$-disjoint if and only
if $\sigma^{ir}(C)\cap C=\{\textbf{0}\}$ for all $1\leq i <d.$ By
part 2. of Lemma~\ref{lem3} this is possible if and only if
$\mathcal{P}(S;n;\mu(g,\sigma^{ir}(g)))=\{\textbf{0}\},$ which
happens if and only if $\texttt{deg}(\mu(g,\sigma^{ir}(g)))\geq
n.$
\end{proof}

Note that if
$S^n=C\oplus\sigma^r(C)\oplus\cdots\oplus\sigma^{r(d-1)}(C),$ and
since $S^n$ is a free $S$-module, therefore  $C$ is
a free $S$-module. We have the following result.

\begin{Lemma} Let $C$ be an $S$-linear code over $S$ of length $n.$  If $C$ is complete
$\langle\sigma^r\rangle$-disjoint then $C$ is free.
\end{Lemma}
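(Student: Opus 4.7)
The plan is to exploit the hypothesis $S^n = \bigoplus_{i=0}^{d-1} \sigma^{ir}(C)$ to compare invariants of $C$ against those of the free module $S^n$. Conceptually, $C$ is a direct summand of a free module over the local ring $S$, hence projective, hence (as a finitely generated module over a local ring) free; below I sketch a more concrete version that stays in the standard-form language of Section \ref{sec:2}.

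First I would check that each summand $\sigma^{ir}(C)$ has the same type as $C$. From the explicit formula $\sigma\bigl(\sum_{t} a_t \theta^t\bigr) = \sum_{t} a_t^{p} \theta^t$, the automorphism $\sigma$ fixes $\theta$ and therefore preserves every ideal $\theta^t S$. Consequently, applying $\sigma^{ir}$ entrywise to a generator matrix of $C$ in standard form (\ref{stG}) produces a generator matrix in standard form of $\sigma^{ir}(C)$ with the same parameters $(k_0, k_1, \dots, k_{s-1})$. In particular $\texttt{rank}_S(\sigma^{ir}(C)) = \texttt{rank}_S(C)$ and $|\sigma^{ir}(C)| = |C|$.

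Next I would read off two numerical identities from the direct-sum decomposition. Since tensoring with $S/\theta S$ preserves finite direct sums of $S$-modules, reducing modulo $\theta$ gives
\[ n \;=\; \dim_{\mathbb{F}_{p^m}} S^n/\theta S^n \;=\; d \cdot \texttt{rank}_S(C) \;=\; d \sum_{t=0}^{s-1} k_t, \]
while comparing total cardinalities yields $p^{smn} = |S^n| = |C|^d = (p^m)^{d \sum_t (s-t) k_t}$, so
\[ \sum_{t=0}^{s-1}(s-t)\,k_t \;=\; \frac{sn}{d}. \]

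Subtracting $s$ times the first identity from the second gives $\sum_{t=0}^{s-1} t\,k_t = 0$. Because each $k_t$ is a nonnegative integer, this forces $k_t = 0$ for all $t \geq 1$, so $C$ has type $(n/d, 0, \dots, 0)$ and is free of rank $n/d$. The only delicate step is the invariance of the type under $\sigma^{ir}$, and this reduces to the explicit fact that $\sigma$ fixes $\theta$; everything else is bookkeeping with $|S|$, $|C|$ and dimensions over $\mathbb{F}_{p^m}$.
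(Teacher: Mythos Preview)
Your proposal is correct, and in fact your opening conceptual sentence \emph{is} the paper's proof in its entirety: the paper simply observes, just before stating the lemma, that $S^n$ is free and hence so is any direct summand of it (implicitly invoking that projective equals free over the local ring $S$). The concrete standard-form argument you then develop is a genuinely different execution: rather than appealing to the projective-over-local fact, you extract two scalar identities (one from cardinalities, one from ranks modulo $\theta$) and force $k_t=0$ for $t\geq 1$ by elementary arithmetic. This buys you a self-contained argument that never leaves the type/rank bookkeeping of Section~\ref{sec:2}, at the cost of a few more lines; the paper's route is shorter but assumes the reader knows the commutative-algebra fact. Your check that $\sigma$ fixes $\theta$ and hence preserves the type is the only non-formal step, and it is fine.
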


The following example gives a $\langle\sigma^r\rangle$-disjoint
code which is non-free.

\begin{Example} Let $\sigma$ be a generator of
$\texttt{Aut}(\mathbb{Z}_4[\alpha])$ defined by
$\sigma(\alpha)=\alpha^2.$ Consider the $\textbf{a}$-cyclic code
$C$ over $\mathbb{Z}_4[\alpha]$ with Gröbner basis
$\left\{g_1\,;\,2 g_2\right\},$ given in Example~\ref{ex1}. A
generator matrix of $\sigma(C)$ is
$$
\left(%
\begin{array}{ccccc}
  3\alpha & 3 & 3\alpha^2 & \alpha^2+3 & 1 \\
  2\alpha^2 & 2 & 2 & 0 & 0 \\
  0 & \alpha^2 & 2 & 2 & 0
\end{array}%
\right),
$$ and $\sigma(C)$ is an $\sigma(\textbf{a})$-cyclic code of
length $5$ with
$\textbf{a}=(3\alpha^2,2\alpha^2,2\alpha^2,2\alpha^2,2\alpha^2).$
A generating set of $C\cap\sigma(C)$ is $C\cap(\mathbb{Z}_4)^5$
(see \cite{Fot16}), but $C\cap(\mathbb{Z}_4)^5=\{\textbf{0}\}.$
Hence $C$ is Galois disjoint.
\end{Example}

\begin{Theorem} The code $C=\mathcal{P}(S;n;g)$ is complete $\langle\sigma^r\rangle$-disjoint if and only if
$\texttt{deg}(g)=(d-1)\frac{n}{d}$ and
$\texttt{deg}(\mu(g,\sigma^{ir}(g)))\geq n,$ for $1\leq i\leq
d-1.$
\end{Theorem}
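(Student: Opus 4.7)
The plan is to argue the two directions by combining rank and cardinality bookkeeping with a residue-field reduction via Nakayama's lemma.

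For the ``only if'' direction, I would assume $S^n=C\oplus\sigma^r(C)\oplus\cdots\oplus\sigma^{r(d-1)}(C)$. The code $C=\mathcal{P}(S;n;g)$ is free of rank $n-\texttt{deg}(g)$ by definition, and each Galois conjugate $\sigma^{ir}(C)=\mathcal{P}(S;n;\sigma^{ir}(g))$ is free polycyclic of the same rank, since $\sigma^{ir}(g)$ is monic of the same degree as $g$. Counting cardinalities in the direct sum yields $|S|^n=|S|^{d(n-\texttt{deg}(g))}$, which forces $\texttt{deg}(g)=(d-1)n/d$. The decomposition also implies $\sigma^{ir}(C)\cap C=\{\textbf{0}\}$ for $1\leq i<d$, so Proposition~\ref{pro4} supplies the second condition.

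For the ``if'' direction, let $\Sigma:=C+\sigma^r(C)+\cdots+\sigma^{r(d-1)}(C)$ and reduce modulo $\theta$. Because $\mu$ is the Hensel lift of $\texttt{lcm}$, both hypotheses descend to the analogous statements on $\pi(g)\in\mathbb{F}_{p^m}[X]$, namely $\texttt{deg}(\pi(g))=(d-1)n/d$ and $\texttt{deg}(\texttt{lcm}(\pi(g),\texttt{Fr}_p^{ir}(\pi(g))))\geq n$. Invoking the field version of the theorem (cf.\ \cite{Bla17}) applied to $\pi(C)=\mathcal{P}(\mathbb{F}_{p^m};n;\pi(g))$ then gives
\[
\pi(\Sigma)=\pi(C)+\texttt{Fr}_p^r(\pi(C))+\cdots+\texttt{Fr}_p^{r(d-1)}(\pi(C))=\mathbb{F}_{p^m}^n,
\]
equivalently $\Sigma+\theta S^n=S^n$. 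Since $S^n/\Sigma$ is finitely generated over the local ring $S$ with maximal ideal $\theta S$, Nakayama's lemma forces $\Sigma=S^n$. To upgrade this equality to the desired direct-sum decomposition, I would close with a final cardinality step: the addition map $C\times\sigma^r(C)\times\cdots\times\sigma^{r(d-1)}(C)\to\Sigma$ is a surjection of finite sets whose domain has cardinality $|S|^{d\cdot n/d}=|S|^n=|\Sigma|$, hence is a bijection. This simultaneously establishes completeness and directness.

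The main obstacle is the invocation of the field-case result from \cite{Bla17}. If its statement does not specialize cleanly to the form required, the field analogue would have to be proved from scratch by factoring $X^n-\pi(\Psi(\textbf{a}))$ into its coprime basic irreducibles in $\mathbb{F}_{p^m}[X]$, applying the Chinese Remainder Theorem to identify $\pi(C)$ with a suitable product of residue fields, and tracking the induced $\langle\texttt{Fr}_p^r\rangle$-action on the components. Once that reduction is in hand, the Nakayama lift and the cardinality check are entirely formal.
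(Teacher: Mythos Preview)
Your argument is correct, and for the ``only if'' direction it coincides with the paper's: a cardinality/rank count yields $\texttt{deg}(g)=(d-1)n/d$, and Proposition~\ref{pro4} supplies the second condition. For the ``if'' direction the routes diverge. The paper simply asserts, without further justification, that ``$C$ is complete $\langle\sigma^r\rangle$-disjoint if and only if $C$ is $\langle\sigma^r\rangle$-disjoint and $\texttt{rank}_S(C)=n/d$,'' and then reads off the conclusion from Proposition~\ref{pro4}. You instead provide an actual argument for the harder implication: pass to the residue field, invoke the field-case result to get $\pi(\Sigma)=\mathbb{F}_{p^m}^n$, lift via Nakayama to $\Sigma=S^n$, and close with a cardinality comparison to force directness. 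Your approach is more explicit and fills in precisely the step the paper treats as obvious; its only cost is the appeal to the field version in \cite{Bla17}, which (as you correctly flag) may need to be reproved via CRT if that reference covers only constacyclic rather than general polycyclic codes. The paper's one-line route avoids any external dependence but leaves the reader to check that pairwise disjointness plus the rank constraint really do force the full direct-sum decomposition over a chain ring---exactly what your Nakayama-plus-counting step establishes.
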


\begin{proof} We note that $C$ is complete $\langle\sigma^r\rangle$-disjoint if and only
if $C$ is $\langle\sigma^r\rangle$-disjoint and
$\texttt{rank}_S(C)=\frac{n}{d}.$ Now
$\texttt{rank}_S(C)=n-\texttt{deg}(g)$ and by
Proposition~\ref{pro4} it follows that $C$ is complete
$\langle\sigma^r\rangle$-disjoint if and only if
$\texttt{deg}(g)=(d-1)\frac{n}{d}$ and
$\texttt{deg}(\mu(g,\sigma^{ir}(g)))\geq n$ for $1\leq i\leq
d-1.$
\end{proof}

\subsection{Trace and restriction of polycyclic codes}\label{subsec:42}

  Let $C$ be an $S$-linear code of length $n.$ The \emph{restriction code} $\texttt{Res}_r(C)$ of $C$ to $S_r$ is
defined to be $\texttt{Res}_r(C):=C\cap (S_r)^n,$ and the
\emph{trace code} $\texttt{Tr}_d(C)$ of $C$ to $S_r$ is defined
as
$$\texttt{Tr}_d(C):=\biggl\{\left(\texttt{Tr}_d(c_0),\texttt{Tr}_d(c_1),\cdots,\texttt{Tr}_d(c_{n-1})\right)\;:\;(c_0,c_1,\cdots,c_{n-1})\in
C\biggr\},$$ where
$\texttt{Tr}_d=\sum\limits_{i=0}^{d-1}\sigma^{ir}.$ The following result describes
 the restriction code and trace code of any free polycyclic
code.

\begin{Theorem} Let $C=\mathcal{P}(S;n;g).$ Set $\mu_d(g):=\mu(g,\sigma^{r}(g),\cdots,\sigma^{r(d-1)}(g))$ and
$\delta_d(g)=\delta(g,\sigma^{r}(g),\cdots,\sigma^{r(d-1)}(g))$.
\begin{enumerate}
    \item $\texttt{Res}_r(C)=\mathcal{P}(S_r;n;\mu_d(g)).$
    \item
    $\texttt{Tr}_d(C)\subseteq\mathcal{P}(S_r;n;\delta_d(g)),$
    with equality if $\sum\limits_{i=0}^{d-d}\sigma^{ir}(C)$ is free and $\texttt{deg}(\mu_d(g))\leq n.$
\end{enumerate}
\end{Theorem}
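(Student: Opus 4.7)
The plan is to reduce both parts to Lemma~\ref{lem3} combined with the observation that $\mu_d(g)$ and $\delta_d(g)$ already have coefficients in $S_r$. Indeed the $\langle\sigma^r\rangle$-orbit $\{\pi(\sigma^{ir}(g)) : 0 \leq i < d\}$ is Galois-stable, so its $\texttt{lcm}$ and $\texttt{gcd}$ in $\mathbb{F}_{p^m}[X]$ are fixed by $\texttt{Fr}_p^r$, and by uniqueness of the Hensel lift their lifts to $S[X]$ land in $S_r[X]$. I will also repeatedly use that for a monic $f \in S_r[X]$, the divisibility $f \mid \Psi(\textbf{y})$ in $S[X]$ is equivalent to divisibility in $S_r[X]$ when $\textbf{y} \in (S_r)^n$, by uniqueness of the Euclidean remainder.

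For (1), the first step is to rewrite $\texttt{Res}_r(C) = C \cap (S_r)^n$ as the $\sigma^r$-fixed part of $C$. Any $\textbf{c}$ in this set satisfies $\textbf{c} = \sigma^{ir}(\textbf{c}) \in \sigma^{ir}(\mathcal{P}(S;n;g)) = \mathcal{P}(S;n;\sigma^{ir}(g))$ for every $0 \leq i < d$, so Lemma~\ref{lem3}(2) places it in $\mathcal{P}(S;n;\mu_d(g))$. Intersecting with $(S_r)^n$ and applying the background divisibility remark yields $\texttt{Res}_r(C) \subseteq \mathcal{P}(S_r;n;\mu_d(g))$, and the reverse containment is immediate since $g$ divides $\mu_d(g)$.

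For the inclusion in (2), I would write $\texttt{Tr}_d(\textbf{c}) = \sum_{i=0}^{d-1}\sigma^{ir}(\textbf{c}) \in \sum_{i=0}^{d-1}\sigma^{ir}(C)$, apply Lemma~\ref{lem3}(3) to bound the sum by $\mathcal{P}(S;n;\delta_d(g))$, and intersect with $(S_r)^n$ as before. Under the stated hypotheses Lemma~\ref{lem3}(4) upgrades the inclusion to the equality $\sum_{i}\sigma^{ir}(C) = \mathcal{P}(S;n;\delta_d(g))$; since $\texttt{Tr}_d\circ\sigma^{ir} = \texttt{Tr}_d$, this gives $\texttt{Tr}_d(C) = \texttt{Tr}_d(\mathcal{P}(S;n;\delta_d(g)))$. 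It then remains to prove $\texttt{Tr}_d(\mathcal{P}(S;n;\delta_d(g))) = \mathcal{P}(S_r;n;\delta_d(g))$: the $\subseteq$ direction follows from Galois invariance of $\delta_d(g)$ together with divisibility, and for $\supseteq$ I would use the surjectivity of $\texttt{Tr}_d : S \to S_r$ for Galois extensions of finite chain rings to choose $\alpha \in S$ with $\texttt{Tr}_d(\alpha) = 1_{S_r}$, so that for every $\textbf{y} \in \mathcal{P}(S_r;n;\delta_d(g))$ the vector $\alpha\textbf{y}$ lies in $\mathcal{P}(S;n;\delta_d(g))$ and satisfies $\texttt{Tr}_d(\alpha\textbf{y}) = \textbf{y}\,\texttt{Tr}_d(\alpha) = \textbf{y}$ by $S_r$-linearity of the trace.

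The main obstacle is this last surjectivity step: one must invoke the standard but non-trivial fact that the trace of a Galois extension $S/S_r$ of finite chain rings is surjective, which supplies the element $\alpha$, and then confirm that multiplication by $\alpha$ preserves the polycyclic structure, which itself reduces to the elementary divisibility observation stated above. Everything else in the argument is formal manipulation via $\Psi$ together with Lemma~\ref{lem3}.
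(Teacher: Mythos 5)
Your proof is correct and follows essentially the same route as the paper: both parts reduce to Lemma~\ref{lem3} via the fixed-point description of $\texttt{Res}_r(C)$, the invariance $\texttt{Tr}_d\circ\sigma^{ir}=\texttt{Tr}_d$, and the identity $\texttt{Tr}_d\left(\mathcal{P}(S;n;\delta_d(g))\right)=\mathcal{P}(S_r;n;\delta_d(g))$. The only difference is that you spell out details the paper leaves implicit, namely that $\mu_d(g),\delta_d(g)$ have coefficients in $S_r$ and that the last identity rests on surjectivity of the trace $\texttt{Tr}_d:S\to S_r$, which the paper asserts only via $\sigma^r(\delta_d(g))=\delta_d(g)$.
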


\begin{proof}$\quad $

\begin{enumerate}
  \item By the first part of Lemma~\ref{lem3} we note that $\mathcal{P}(S_r;n;\mu_d(g))\subseteq C$ and is in $(S_r)^n,$ thus
$\mathcal{P}(S_r;n;\mu_d(g))\subseteq\texttt{Res}_r(C).$
Conversely, if $\textbf{a}\in\texttt{Res}_r(C),$ then
$\textbf{a}=\sigma^{ir}(\textbf{a})$ for all $0\leq i\leq d-1,$ therefore
$\textbf{a}\in\bigcap\limits_{i=0}^{d-1}\sigma^{ir}(C)$ and from
the second part  of Lemma~\ref{lem3} it follows that
$\textbf{a}\in\mathcal{P}(S_r;n;\mu_d(g)).$ Hence
$\texttt{Res}_r(C)=\mathcal{P}(S_r;n;\mu_d(g)).$

\item From the third part of
Lemma~\ref{lem3} it follows that
$\sum\limits_{i=0}^{d-d}\sigma^{ir}(C)\subseteq\mathcal{P}(S;n;\delta_d(g)).$
Now
$\texttt{Tr}_d\left(\sum\limits_{i=0}^{d-d}\sigma^{ir}(C)\right)=\texttt{Tr}_d(C)$
and
$\texttt{Tr}_d\left(\mathcal{P}(S;n;\delta_d(g))\right)=\mathcal{P}(S_r;n;\delta_d(g))$
since $\sigma^{r}(\delta_d(g))=\delta_d(g).$ Thus
$\texttt{Tr}_d(C)\subseteq\mathcal{P}(S_r;n;\delta_d(g))$,
moreover, if $\sum\limits_{i=0}^{d-d}\sigma^{ir}(C)$ is free and
$\texttt{deg}(\mu_d(g))\leq n,$ then by part 4 of
Lemma~\ref{lem3}
$\sum\limits_{i=0}^{d-d}\sigma^{ir}(C)=\mathcal{P}(S;n;\delta_d(g)).$
Whence it follows the equality
$\texttt{Tr}_d(C)=\mathcal{P}(S_r;n;\delta_d(g)).$
\end{enumerate}

\end{proof}

Finally the following result gives an analogue Delsarte's theorem relating the
trace map of a $S$-linear code and, in this case, its annihilator dual.

\begin{Proposition}\label{Delsarte} For any $S$-linear code $C$
the following holds $$\texttt{Tr}_d(C^{\circ})=(\texttt{Res}_r(C))^{\circ}.$$
\end{Proposition}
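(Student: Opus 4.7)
\smallskip\noindent\textbf{Proof plan.}
The plan is to reduce the annihilator-dual version to the classical Euclidean Delsarte identity by conjugating with the Gram matrix of the bilinear form $\langle\cdot\,;\,\cdot\rangle_{\textbf{a}}$. Let $\mathrm{A}=(\langle\textbf{e}_i;\textbf{e}_j\rangle_{\textbf{a}})$ be the Gram matrix in the canonical basis of $S^n$. Because $\textbf{a}\in(S_r^\times)\times(S_r)^{n-1}$ in this section, the polynomial $X^n-\Psi(\textbf{a})$ lies in $S_r[X]$, and so the remainders $\{X^{i+j-2}\}_{\textbf{a}}$ stay in $S_r[X]$; hence $\mathrm{A}$ has all its entries in $S_r$. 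Moreover $\mathrm{A}$ is invertible over $S_r$ because the form is nondegenerate.

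First I would rewrite both sides via Lemma~\ref{circ0}. On the one hand, $C^{\circ}=(C\cdot\mathrm{A})^{\perp}$ in $S^n$, so
\begin{equation*}
\texttt{Tr}_d(C^{\circ})=\texttt{Tr}_d\bigl((C\cdot\mathrm{A})^{\perp}\bigr).
\end{equation*}
On the other hand, applying Lemma~\ref{circ0} inside $(S_r)^n$ with the very same matrix $\mathrm{A}$ (which lies in $M_n(S_r)$) gives
\begin{equation*}
(\texttt{Res}_r(C))^{\circ}=\bigl(\texttt{Res}_r(C)\cdot\mathrm{A}\bigr)^{\perp}.
\end{equation*}
So it is enough to establish the two equalities
\begin{equation*}
\texttt{Tr}_d\bigl((C\cdot\mathrm{A})^{\perp}\bigr)=\bigl(\texttt{Res}_r(C\cdot\mathrm{A})\bigr)^{\perp}\qquad\text{and}\qquad \texttt{Res}_r(C\cdot\mathrm{A})=\texttt{Res}_r(C)\cdot\mathrm{A}.
\end{equation*}
The second equality is a direct verification: since $\mathrm{A}$ and $\mathrm{A}^{-1}$ have entries in $S_r$, a vector $\textbf{c}\mathrm{A}\in C\cdot\mathrm{A}$ lies in $(S_r)^n$ if and only if $\textbf{c}\in(S_r)^n$, i.e.\ $\textbf{c}\in\texttt{Res}_r(C)$.

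The first equality is the classical Euclidean Delsarte theorem applied to the $S$-linear code $C\cdot\mathrm{A}$. For completeness, I would prove one inclusion directly and conclude by a counting/type argument. For $\textbf{y}\in(C\cdot\mathrm{A})^{\perp}$ and $\textbf{c}\in\texttt{Res}_r(C\cdot\mathrm{A})\subseteq(S_r)^n$, the $S_r$-linearity of $\texttt{Tr}_d$ and the fact that $\sigma^{ir}$ fixes the coordinates of $\textbf{c}$ yield
\begin{equation*}
\langle\texttt{Tr}_d(\textbf{y})\,;\,\textbf{c}\rangle=\sum_{i=0}^{d-1}\sigma^{ir}\bigl(\langle\textbf{y}\,;\,\textbf{c}\rangle\bigr)=0_S,
\end{equation*}
so $\texttt{Tr}_d((C\cdot\mathrm{A})^{\perp})\subseteq(\texttt{Res}_r(C\cdot\mathrm{A}))^{\perp}$. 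Equality then follows from the standard cardinality count for Galois extensions of finite chain rings: the types of both sides coincide because $\texttt{Res}_r$ and $\perp$ are linked by the identity $|\texttt{Res}_r(D)|\cdot|D^{\perp}|=|S|^n/|S_r|^{n-\texttt{rank}(D)}$ available for any $S$-linear code $D$ whose Galois images generate $S^n$, and on both sides one obtains the same $S_r$-type.

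The main obstacle is precisely this last step, i.e.\ invoking the Euclidean Delsarte identity over a finite chain ring rather than a field. Once that identity is granted (either directly by the counting above or by the results cited in \cite{MNR13,Bla17}), chaining the three equalities gives
\begin{equation*}
\texttt{Tr}_d(C^{\circ})=\texttt{Tr}_d\bigl((C\cdot\mathrm{A})^{\perp}\bigr)=\bigl(\texttt{Res}_r(C)\cdot\mathrm{A}\bigr)^{\perp}=(\texttt{Res}_r(C))^{\circ},
\end{equation*}
which is exactly the claimed Delsarte-type identity for the annihilator dual.
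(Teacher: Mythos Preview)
Your reduction to the Euclidean Delsarte identity via conjugation by the Gram matrix $\mathrm{A}$ is sound and genuinely different from the paper's argument. The paper never passes through $\perp$ at all: it works directly with $\langle\cdot\,;\,\cdot\rangle_{\textbf{a}}$, first showing $\texttt{Tr}_d(C^{\circ})\subseteq(\texttt{Res}_r(C))^{\circ}$ by the same trace-linearity computation you use, and then obtaining the reverse inclusion by a double-annihilator trick: for $\textbf{x}\in(\texttt{Tr}_d(C^{\circ}))^{\circ}\subseteq(S_r)^n$ one has $\texttt{Tr}_d\bigl(\lambda\langle\textbf{x},\textbf{z}\rangle_{\textbf{a}}\bigr)=0$ for every $\textbf{z}\in C^{\circ}$ and every $\lambda\in S$, so nondegeneracy of the trace form on $S$ forces $\langle\textbf{x},\textbf{z}\rangle_{\textbf{a}}=0$; hence $\textbf{x}\in(C^{\circ})^{\circ}=C$ and thus $\textbf{x}\in\texttt{Res}_r(C)$, whence $(\texttt{Res}_r(C))^{\circ}\subseteq\texttt{Tr}_d(C^{\circ})$ by Corollary~\ref{circ}. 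Your approach trades this nondegeneracy step for an appeal to the Euclidean Delsarte theorem over finite chain rings, which is legitimate via \cite{MNR13}, and has the merit of explaining structurally \emph{why} the annihilator-dual version holds (it is the Euclidean version transported through an $S_r$-linear isomorphism). The paper's route is more self-contained and avoids any external cardinality or type bookkeeping. Two remarks: first, both arguments silently need $\textbf{a}\in(S_r)^n$ so that the form $\langle\cdot\,;\,\cdot\rangle_{\textbf{a}}$ restricts to $(S_r)^n$---you make this explicit, the paper does not; second, the ad hoc cardinality formula you sketch for the reverse inclusion is not correct as written, so if you keep your approach you should simply cite \cite{MNR13} for the Euclidean identity rather than attempt a direct count.
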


\begin{proof} Let $\textbf{x}\in\texttt{Tr}_d(C^{\circ})$,
 there is $\textbf{y}\in C^{\circ}$ such that
$\textbf{x}=\texttt{Tr}_d(\textbf{y}).$
 For all
$\textbf{c}\in\texttt{Res}_r(C),$
\begin{eqnarray*}
  \left\langle\,\textbf{x},\textbf{c}\,\right\rangle_{\textbf{a}} &=& \left\langle\,\texttt{Tr}_d(\textbf{y}),\textbf{y}\,\right\rangle_{\textbf{a}}, \\
    &=&
    \texttt{Tr}_d\left(\left\langle\,\textbf{y},\textbf{z}\,\right\rangle_{\textbf{a}}\right), \\
    &=&\texttt{Tr}_d(0_ S),\\
    &=&0_S.
\end{eqnarray*}
Thus $\texttt{Tr}_d(
C^{\circ})\subseteq(\texttt{Res}_r(C))^{\circ}.$ On the other hand
if $\textbf{x}\in\left(\texttt{Tr}_d(C^{\circ})\right)^{\circ}$
then
$\left\langle\,\textbf{x},\texttt{Tr}_d(\textbf{z})\right\rangle_{\textbf{a}}=0_S$
for all $\textbf{z}$ in $ C^{\circ}.$ Now
$\texttt{Tr}_d(\left\langle\,\textbf{x},\lambda\textbf{z}\,\right\rangle)=\left\langle\,\textbf{x},\texttt{Tr}_d(\lambda\textbf{z})\right\rangle_{\textbf{a}}=0_S$
and
$\left\langle\,\textbf{x},\lambda\textbf{z}\right\rangle_{\textbf{a}}=\lambda(\left\langle\,\textbf{x},\textbf{z}\right\rangle_{\textbf{a}})$
for all $\lambda$ in $ S.$ Since $(x,y)\mapsto\texttt{Tr}_d(x,y)$
is a symmetry nondegenerate bilinear form it follows that
$\left\langle\,\textbf{x},\textbf{z}\right\rangle_{\textbf{a}}=0_S.$
Hence
$\left(\texttt{Tr}_d(C^{\circ})\right)^{\circ}\subseteq\texttt{Res}_r(C),$
and so $(\texttt{Res}_r(C))^{\circ}\subseteq\left(\texttt{Tr}_d(
C^{\circ})^{\circ}\right)^{\circ}=\texttt{Tr}_d( C^{\circ}),$ by
Corollary\,\ref{circ}. Therefore the equality
$\texttt{Tr}_d(C^{\circ})=(\texttt{Res}_r(C))^{\circ}.$
\end{proof}

\section{Conclusion}

A structural theorem for polycyclic codes over $S$ has been
established  using the concept of strong Gröbner bases. We have
explored the action of the Galois group $\texttt{Aut}(S)$ with
generator $\sigma$ on polycyclic codes over $S$ of length $n$
whose period is relatively prime to $p.$ For any divisor $r$ of
$m,$ the complete $\langle\,\sigma^r\,\rangle$-disjoint polycyclic
codes have been characterized. Finally an analogue Delsarte's for
the annihilator dual has been established. Although the trace
codes of free polycyclic codes have been determined, the
construction of a strong Gröbner basis for trace codes of non-free
polycyclic codes will be the object of further study.

\end{document}